\DeclareSymbolFontAlphabet{\amsmathbb}{AMSb}%
\definecolor{cblue}{rgb}{0.16, 0.32, 0.75}
\definecolor{cred}{rgb}{0.7, 0.11, 0.11}
\def\<{\langle}
\def\>{\rangle}
\newtheorem{Example}{Example}
\newtheorem{theorem}{Theorem}
\newtheorem{proposition}{Proposition}
\newtheorem{corollary}{Corollary}
\newtheorem{definition}{Definition}
\newtheorem{remark}{Remark}
\def\oper{{\mathchoice{\rm 1\mskip-4mu l}{\rm 1\mskip-4mu l}
		{\rm 1\mskip-4.5mu l}{\rm 1\mskip-5mu l}}}
\newcommand{\rT}{{\rm T}}
\DeclareMathAlphabet\mathbfcal{OMS}{cmsy}{b}{n}
\begin{document}	

	
	
	
\title{\textbf{A class of Schwarz qubit maps with diagonal unitary and orthogonal symmetries}}


	\author[$\hspace{0cm}$]{Dariusz Chru\'sci\'nski\footnote{darch@fizyka.umk.pl} and Bihalan Bhattacharya\footnote{bihalan.bhattacharya@v.umk.pl} }
	\affil{\small Institute of Physics, Faculty of Physics, Astronomy and Informatics, Nicolaus Copernicus University, Grudziadzka 5/7, 87-100 Toru\'n, Poland}

	\maketitle
	\vspace{-0.5cm}	
	
	\begin{abstract}
A class of unital qubit maps displaying diagonal unitary and orthogonal symmetries is analyzed. Such maps already found a lot applications in quantum information theory. We provide a complete characterization of this class of maps showing intricate relation between positivity, operator Schwarz inequality, and complete positivity. Finally, it is shown how to generalize the entire picture beyond unital case (so called generalized Schwarz maps). Interestingly, the first example of  Schwarz but not completely positive map found by Choi belongs to our class. As a case study we provide a full characterization of Pauli maps. Our analysis leads to generalization of seminal Fujiwara-Algoet conditions for Pauli quantum channels.

	\end{abstract}
	
	\maketitle
	
\section{Introduction}\label{sec:1}

Symmetry plays an important role in modern physics. Very often the presence of symmetry
enables one to simplify the analysis of the
corresponding problems and leads to much deeper understanding and the most elegant mathematical formulation
of the corresponding physical theory. In quantum information theory \cite{QIT} the idea of symmetry is realized in a natural way on the level of maps (e.g. quantum channels) and quantum states. Consider a unitary $n$-dimensional representations $U$ of a symmetry group $G$.  A  linear map $\Phi : \mathcal{M}_n \to \mathcal{M}_n$ is $U$-covariant if

\begin{equation}\label{}
  U \Phi(X) U^\dagger = \Phi(UXU^\dagger) ,
\end{equation}
for all $X \in \mathcal{M}_n$ and all elements $U$ from the group (we identify the group with its unitary representation). $\Phi$ is called conjugate $U$-covariant if

\begin{equation}\label{}
  U \Phi(X) U^\dagger = \Phi(\overline{U}XU^\rT) ,
\end{equation}
where $U^\rT = \overline{U}^\dagger$ denotes the transposition. The studies of covariant completely positive maps where initiated by Scutaru \cite{Scutaru} and then developed by Holevo \cite{Holevo-1,Holevo-2} (for more recent analysis cf. e.g. \cite{C1,C2,C3,C4}). On the level of quantum states a bipartite state $\rho$ living in $\mathbb{C}^n \otimes \mathbb{C}^n$ is $U\otimes U$-invariant if

\begin{equation}\label{W}
  U \otimes U \rho U^\dagger \otimes U^\dagger = \rho ,
\end{equation}
and it is $U\otimes \overline{U}$-invariant if

\begin{equation}\label{I}
  U \otimes \overline{U} \rho U^\dagger \otimes U^{\rm T} = \rho .
\end{equation}
In the case when $G=\mathcal{U}(n)$ the $U\otimes U$- and  $U\otimes \overline{U}$-invariant states are nothing but celebrated Werner and isotropic states, respectively \cite{Werner,Iso,HHHH}. Both Werner and isotropic states play a prominent role in quantum information theory \cite{HHHH}. Note that $\Phi$ is $(U,U)$-covariant iff the corresponding Choi matrix

\begin{equation}\label{}
  C_\Phi = \sum_{i,j=1}^n |i\>\<j| \otimes \Phi(|i\>\<j|) ,
\end{equation}
is $U\otimes \overline{U}$-invariant. Similarly, $\Phi$ is $(U,\overline{U})$-covariant iff $C_\Phi$ is $U\otimes U$-invariant.

An important classes of covariant maps and invariant states correspond to $G=\mathcal{D}\mathcal{U}(d)$, i.e. diagonal unitary $n\times n$ matrices. This class of so called diagonal unitary covariant maps was recently analyzed in \cite{Ion1,Ion2,Ion3,Ion4}. In particular it was shown \cite{Ion3} that diagonal covariant maps satisfy the celebrated PPT$^2$ conjecture \cite{PPT1,PPT2}, that is, composition of any two PPT maps is entanglement breaking.

In this paper we analyze diagonal unitary covariant qubit maps. Such maps are called phase-covariant

\begin{equation}\label{I}
  \Phi(e^{i \varphi \sigma_z} X e^{-i \varphi \sigma_z}) = e^{i \varphi \sigma_z}\Phi(X) e^{-i \varphi \sigma_z} ,
\end{equation}
or conjugate phase-covariant

\begin{equation}\label{II}
  \Phi(e^{i \varphi \sigma_z} X e^{-i \varphi \sigma_z}) = e^{-i \varphi \sigma_z}\Phi(X) e^{i \varphi \sigma_z} ,
\end{equation}
for arbitrary real phase $\varphi$. The qubit maps satisfies both (\ref{I}) and (\ref{II}) iff

\begin{equation}\label{III}
  \Phi(\sigma_z X \sigma_z) =  \sigma_z \Phi(X)  \sigma_z ,
\end{equation}
that is, $\Phi$ commutes with a unitary Pauli map $\sigma_z \bullet \sigma_z$. In this case $G$ consists of diagonal orthogonal $2 \times 2$ matrices. In particular  we analyze the operator Schwarz inequality

\begin{equation}\label{S!}
  \Phi(X^\dagger X) \geq \Phi(X^\dagger)\Phi(X) .
\end{equation}
A unital map $\Phi : \mathcal{M}_n \to \mathcal{M}_m$ is called a {\em Schwarz map} if (\ref{S!}) holds for all $X \in \mathcal{M}_n$. It is well known that Schwarz property is stronger than positivity and weaker than complete positivity \cite{Paulsen,Stormer,BHATIA,WOLF,PR}. The first example of a Schwarz map $\Phi : \mathcal{M}_2 \to \mathcal{M}_2$ which is not completely positive was provided by Choi \cite{Choi2}. Interestingly, unital positive maps satisfy a weaker condition, i.e. (\ref{S!}) for all $X=X^\dagger$, called a Kadison inequality \cite{Kadison}.

Completely positive maps provide mathematical representation of quantum channels and more generally quantum operations. However, it is just a Schwarz property which allows to prove several monotonicity results in quantum information theory \cite{Petz,Alex,Carlen}. Interestingly, Schwarz property  is sufficient for the proper description of the asymptotics of  open quantum systems \cite{Amato,Amato2}.  Recently, so called Schwarz-divisibility was introduced for the analysis of Markovianity of open quantum dynamics \cite{KS} and in a recent paper \cite{GF} it was shown that relaxation rates of Markovian semigroups of qubit unital Schwarz maps satisfy the universal constraint which modify the corresponding constraint for completely positive semigroups \cite{PRL,LAA}.

The paper is organized as follows: {Section \ref{Se-II}} provides a brief introduction to the class of covariant qubit maps we are going to analyze. Moreover, positivity and complete positivity within this class is discussed. Condition for complete positivity is very simple. However, it seems that positivity was not analyzed in full generality within this class of maps.  In Section \ref{S-III} we characterize Schwarz unital maps which are phase-covariant or conjugate phase-covariant. Section \ref{S-IV} generalizes this result for the whole class defined by (\ref{III}). Interestingly, the entire analysis has a simple geometric interpretation. As an instructive illustration we provide a full characterization of well known Pauli maps in Section \ref{S-Pauli}. In this case conditions for positivity and complete positivity are well known. However, full characterization of Schwarz property was missing in the literature. Again, the entire characterization gives rise to suggestive geometric picture. In section \ref{S-IVa} we extend the analysis to include so called generalized Schwarz maps \cite{Alex}. Final conclusions are collected in Section \ref{S-C}. Additional technical details are presented in the Appendix.

\section{A class of covariant qubit maps}   \label{Se-II}

The qubit maps satisfying diagonal symmetries (\ref{I})-(\ref{III}) are fully characterized by the following


\begin{proposition} A qubit map satisfies (\ref{III}) iff



\begin{equation}\label{!}
  \Phi(X) = \sum_{i,j=1}^2 a_{ij} |i\>\<j| X |j\>\<i| + \Big( \lambda P_1 X P_2 + \overline{\lambda} P_2 X P_1 \Big) + \Big( \mu P_1 X^{\rm T}  P_2 + \overline{\mu} P_2 X^{\rm T}  P_1 \Big) ,
\end{equation}
where $P_i = |i\>\<i|$. Moreover, $\Phi$ is phase-covariant, i.e. satisfies (\ref{I}) iff $\mu=0$, and it is conjugate phase-covariant  i.e. satisfies (\ref{II}) iff $\lambda=0$.
\end{proposition}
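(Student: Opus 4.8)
The plan is to recognize that condition~(\ref{III}) says precisely that $\Phi$ commutes with the involution $U_z := \sigma_z \bullet \sigma_z$ on $\MC$. First I would diagonalize $U_z$: since $\sigma_z|1\> = |1\>$ and $\sigma_z|2\> = -|2\>$, one has $U_z(|i\>\<j|) = (-1)^{i+j}|i\>\<j|$, so the $+1$-eigenspace is $\mathrm{span}\{P_1,P_2\}$ (the diagonal matrices) and the $-1$-eigenspace is $\mathrm{span}\{|1\>\<2|,|2\>\<1|\}$ (the off-diagonal matrices). The elementary remark that $U_z X = \pm X$ forces $U_z\Phi(X) = \Phi(U_z X) = \pm\Phi(X)$ shows that (\ref{III}) holds iff $\Phi$ leaves each eigenspace invariant, i.e.\ $\Phi$ is block-diagonal with respect to the splitting of $\MC$ into diagonal and off-diagonal parts. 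Invoking Hermiticity preservation (the natural assumption for a physical qubit map), the diagonal block is then a \emph{real} $2\times 2$ matrix $(a_{ij})$ with $P_j \mapsto \sum_i a_{ij}P_i$, while the most general Hermiticity-preserving action on the off-diagonal block reads $X_{12}|1\>\<2| + X_{21}|2\>\<1| \mapsto (\lambda X_{12}+\mu X_{21})|1\>\<2| + (\overline\mu X_{12}+\overline\lambda X_{21})|2\>\<1|$ for some $\lambda,\mu\in\mathbb{C}$.

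It then remains to check that (\ref{!}) reproduces exactly these two block actions, which is a short computation: $|i\>\<j|X|j\>\<i| = X_{jj}P_i$ realizes the diagonal block, whereas $P_1XP_2 = X_{12}|1\>\<2|$, $P_2XP_1 = X_{21}|2\>\<1|$, $P_1X^\rT P_2 = X_{21}|1\>\<2|$ and $P_2X^\rT P_1 = X_{12}|2\>\<1|$ assemble into precisely the off-diagonal action above. Since conversely every such block-diagonal Hermiticity-preserving map arises this way, this establishes (\ref{III}) $\Leftrightarrow$ (\ref{!}).

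For the two refinements I would write phase-covariance~(\ref{I}) as the statement that $\Phi$ commutes with $\mathrm{Ad}_{V_\varphi}$, $V_\varphi = \e^{\i\varphi\sigma_z}$, for all $\varphi$. On the diagonal block $\mathrm{Ad}_{V_\varphi}$ is trivial, while on the off-diagonal block it acts as $\mathrm{diag}(\e^{2\i\varphi},\e^{-2\i\varphi})$ in the $(X_{12},X_{21})$ coordinates. Hence (\ref{I}) forces the off-diagonal matrix $\left(\begin{smallmatrix}\lambda&\mu\\\overline\mu&\overline\lambda\end{smallmatrix}\right)$ to commute with a diagonal matrix of distinct entries, i.e.\ to be diagonal, giving $\mu=0$; the converse is immediate. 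For conjugate phase-covariance~(\ref{II}) the same computation turns the commutation into the intertwining relation $M\,\mathrm{diag}(\e^{2\i\varphi},\e^{-2\i\varphi}) = \mathrm{diag}(\e^{-2\i\varphi},\e^{2\i\varphi})\,M$, whose two diagonal entries force $\lambda=0$ while leaving $\mu$ free.

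I do not expect a genuine obstacle here; the argument is essentially bookkeeping once the involution is diagonalized. The one point demanding care is the role of Hermiticity preservation: it is exactly this assumption that collapses the off-diagonal block from a general $2\times 2$ matrix to the two-complex-parameter family $\left(\begin{smallmatrix}\lambda&\mu\\\overline\mu&\overline\lambda\end{smallmatrix}\right)$ and makes $(a_{ij})$ real, so that the total count of $8$ real parameters matches the formula~(\ref{!}). The remaining subtlety is purely notational---keeping track of which coordinate picks up $\e^{\pm 2\i\varphi}$---and this is what cleanly separates the surviving parameter ($\mu=0$ versus $\lambda=0$) in the two covariant cases.
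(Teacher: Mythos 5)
Your argument is correct and is precisely the ``straightforward'' verification the paper omits: diagonalize the involution $\sigma_z\bullet\sigma_z$, note that (\ref{III}) is equivalent to $\Phi$ preserving the diagonal/off-diagonal splitting of $\MC$, and read off the two covariance refinements from the action of $\mathrm{Ad}_{e^{\i\varphi\sigma_z}}$ on the off-diagonal block. Your observation that Hermiticity preservation is what collapses the off-diagonal block to the form $\bigl(\begin{smallmatrix}\lambda&\mu\\\overline{\mu}&\overline{\lambda}\end{smallmatrix}\bigr)$ (and should likewise force $a_{ij}\in\mathbb{R}$) is a point the paper glosses over --- without that hypothesis the literal ``iff'' would require a general complex $2\times 2$ block --- so your version is, if anything, slightly more careful than the source.
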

The proof is straightforward. In the matrix form the map (\ref{!}) has the following structure

\begin{equation}\label{}
  \Phi(X) = \left( \begin{array}{cc} a_{11} X_{11} + a_{12} X_{22} & \lambda X_{12} + \overline{\mu} X_{21} \\ \overline{\lambda} X_{21} + {\mu} X_{12} & a_{21} X_{11} + a_{22} X_{22} \end{array} \right) ,
\end{equation}
where $X = \sum_{i,j} X_{ij} |i\>\<j|$. It is clear that $\Phi(X^\dagger) = \Phi(X)^\dagger$, i.e. $\Phi$ is Hermiticity-preserving if $a_{ij} \in \mathbb{R}$.  Moreover, $\Phi$ is trace-preserving if $\sum_i a_{ij}=1$ and unital if $\sum_j a_{ij}=1$.

One easily finds the corresponding Choi matrix

\begin{equation}\label{CHOI}
  C = \left( \begin{array}{cc|cc} a_{11} & . & . & \lambda \\  . & a_{21} & \overline{\mu} & .  \\ \hline   . & \mu & a_{12} & . \\  \overline{\lambda} & . & . & a_{22} \end{array} \right) ,
\end{equation}
where to make the matrix structure more transparent we replaced zeros by dots.

\begin{corollary} $\Phi$ is completely positive if and only if $a_{ij} \geq 0$ and

\begin{equation}\label{CP}
  |\lambda| \leq \sqrt{a_{11} a_{22}}  , \ \ \  |\mu| \leq \sqrt{a_{12} a_{21}} .
\end{equation}

\end{corollary}

This class of maps contains many well known examples of qubit quantum channels positive maps important in entanglement theory.

\begin{Example}[Qubit quantum channels]

A prominent examples of a phase-covariant map is an amplitude damping channel

\begin{equation}\label{}
  \Phi_\eta(\rho) := K_1 \rho K_1^\dagger + K_2 \rho K_2^\dagger ,
\end{equation}
with the following Kraus operators

\begin{equation}\label{}
  K_1 = \left( \begin{array}{cc} 1 & 0 \\ 0 & \sqrt{\eta} \end{array} \right) \ , \ \ \
   K_2 = \left( \begin{array}{cc} 0 & \sqrt{1-\eta} \\ 0 &  0 \end{array} \right) \ , \ \ \ \eta \in[0,1] .
\end{equation}
It corresponds to (\ref{!}) with

\begin{equation}\label{}
  a_{11}=1 \ , \ \ a_{12} = 1- \eta \ , \ \ a_{21} = 0 \ ,  \ \ a_{22} = \eta \ , \ \lambda= \sqrt{\eta}\ , \ \ \mu=0 .
\end{equation}
A generalized amplitude damping channel provides an example of a map with diagonal orthogonal symmetry

\begin{equation}\label{}
  \Phi^{(p)}_{\eta}(\rho) := \sum_{i=1}^4 K_i \rho K_i^\dagger , \ \ \ \ p,\eta \in [0,1]
\end{equation}
where

\begin{equation}\label{}
  K_1 = \sqrt{p}\left( \begin{array}{cc} 1 & 0 \\ 0 & \sqrt{\eta} \end{array} \right) \ , \ \ \
   K_2 = \sqrt{p}\left( \begin{array}{cc} 0 & 0  \\  0 &  \sqrt{1-\eta} \end{array} \right) \ ,
\end{equation}
and
\begin{equation}
    K_3 = \sqrt{1-p}\left( \begin{array}{cc} \sqrt{\eta}  & 0 \\ 0 & 1\end{array} \right) \ , \ \ \
   K_4 = \sqrt{1-p}\left( \begin{array}{cc} 0 & 0  \\ \sqrt{1-\eta}  & 0 \end{array} \right) \ , \ \ \
\end{equation}
It corresponds to (\ref{!}) with

\begin{equation*}\label{}
  a_{11}= p + (1-p) \eta \ , \ \ a_{12} = p(1- \eta) \ , \ \ a_{21} = 1-a_{11} \ , \ \ 
  a_{22} = 1 - a_{12} \ , \ \ 
  \lambda= (1-p)\,\sqrt{\eta}\ , \ \ \mu=p\, \sqrt{1-\eta}  .
\end{equation*}
Another well known example of a qubit map with diagonal orthogonal symmetry is realized by a Pauli channel

\begin{equation}\label{Pauli-M}
  \Phi(X) = \sum_{\alpha=0}^3 p_\alpha \sigma_\alpha X \sigma_\alpha ,
\end{equation}
where $p_\alpha \geq 0$ and $\sum_\alpha p_\alpha = 1$. It corresponds to (\ref{!}) with

$$   a_{11}=a_{22} = p_0+p_3\ , \ \ a_{12}=a_{21} = p_1+p_2 \ , \ \ \lambda = p_0-p_3 \ , \ \ \mu = p_1 - p_2 .   $$
This class of maps is further analyzed in Section \ref{S-Pauli}. This class contains quantum channels important for error correcting codes \cite{QIT,Wilde}: a bit flip channel ($p_0 = p$ and $p_1 = 1-p$), phase-flip channel ($p_0 = p$ and $p_3 = 1-p$), and bit-phase flip channel ($p_0 = p$ and $p_2 = p$).

\end{Example}

\begin{proposition} \label{Pro-P} $\Phi$ is positive if and only if $a_{ij} \geq 0$ and

\begin{equation}\label{P}
  |\lambda| + |\mu| \leq \sqrt{a_{11} a_{22}} +  \sqrt{a_{12} a_{21}} .
\end{equation}
\end{proposition}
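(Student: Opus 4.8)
The plan is to exploit the fact that a linear map is positive iff it sends rank-one projectors to positive semidefinite operators: every $X\geq 0$ is a nonnegative combination of pure states $|\psi\>\<\psi|$, so by linearity of $\Phi$ positivity on the extreme rays of the PSD cone is equivalent to positivity everywhere. First I would test $\Phi$ on the two diagonal projectors $P_1,P_2$; since $\Phi(P_1)=\mathrm{diag}(a_{11},a_{21})$ and $\Phi(P_2)=\mathrm{diag}(a_{12},a_{22})$, their positivity forces all $a_{ij}$ to be real and nonnegative, which is the first half of the claim (and also pins down Hermiticity preservation). For the general case I would parametrize a normalized vector as $|\psi\> = \cos\theta\,|1\> + \e^{\i\chi}\sin\theta\,|2\>$, so that $X=|\psi\>\<\psi|$ has entries $X_{11}=\cos^2\theta$, $X_{22}=\sin^2\theta$, $X_{12}=\overline{X_{21}}=\cos\theta\sin\theta\,\e^{-\i\chi}$, and substitute into the matrix form of $\Phi$ from the Proposition. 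The result is a $2\times2$ Hermitian matrix whose positivity is controlled by its two (already nonnegative) diagonal entries together with the nonnegativity of its determinant.

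The determinant condition reads
\[
(a_{11}\cos^2\theta+a_{12}\sin^2\theta)(a_{21}\cos^2\theta+a_{22}\sin^2\theta)\ \geq\ \cos^2\theta\sin^2\theta\,\big|\lambda\e^{-\i\chi}+\overline{\mu}\,\e^{\i\chi}\big|^2 ,
\]
required for all $\theta$ and all $\chi$. The phase optimization is immediate and decouples from $\theta$, because the diagonal entries are $\chi$-independent: writing $\lambda=|\lambda|\e^{\i\alpha}$ and $\mu=|\mu|\e^{\i\beta}$, the two contributions to $|\lambda\e^{-\i\chi}+\overline{\mu}\,\e^{\i\chi}|$ are brought into phase by $\chi=(\alpha+\beta)/2$, giving the maximum $|\lambda|+|\mu|$. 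Thus the worst case over $\chi$ reduces the requirement to $(a_{11}\cos^2\theta+a_{12}\sin^2\theta)(a_{21}\cos^2\theta+a_{22}\sin^2\theta)\geq \cos^2\theta\sin^2\theta\,(|\lambda|+|\mu|)^2$ for every $\theta$.

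It then remains to optimize over the amplitude $\theta$. Dividing by $\cos^2\theta\sin^2\theta$ and setting $r=\cot^2\theta\in(0,\infty)$, I would rewrite the inequality as
\[
g(r):=\frac{(a_{11}r+a_{12})(a_{21}r+a_{22})}{r}=a_{11}a_{21}\,r+\frac{a_{12}a_{22}}{r}+\big(a_{11}a_{22}+a_{12}a_{21}\big)\ \geq\ (|\lambda|+|\mu|)^2 .
\]
Since all $a_{ij}\geq 0$, the AM--GM inequality gives $\min_{r>0}g(r)=a_{11}a_{22}+a_{12}a_{21}+2\sqrt{a_{11}a_{21}a_{12}a_{22}}=\big(\sqrt{a_{11}a_{22}}+\sqrt{a_{12}a_{21}}\big)^2$, attained at $r=\sqrt{a_{12}a_{22}/(a_{11}a_{21})}$. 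Hence positivity over all pure states is equivalent to $(|\lambda|+|\mu|)^2\leq\big(\sqrt{a_{11}a_{22}}+\sqrt{a_{12}a_{21}}\big)^2$, which is exactly (\ref{P}). The one point I would treat most carefully—and which I expect to be the only real obstacle—is the joint two-parameter extremization: one must verify that the phase and amplitude optimizations genuinely decouple (they do, because the diagonal entries do not depend on $\chi$, so the $\chi$-maximization may be carried out first for each fixed $\theta$), and that the AM--GM minimum remains valid in the degenerate cases where some $a_{ij}=0$, where the extremum is reached only as a limit $r\to 0$ or $r\to\infty$ but still yields the same bound.
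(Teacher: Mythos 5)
Your proof is correct, and it takes a genuinely different route from the paper's. The paper works at the level of the Choi matrix: it invokes the fact that positivity of $\Phi$ is block positivity of $C$, then uses Woronowicz's theorem that every block-positive matrix in $\mathcal{M}_2\otimes\mathcal{M}_2$ is decomposable, writes $C=A+B^\Gamma$ with $A,B\geq 0$ explicitly, and reduces the existence of such a decomposition to a one-variable minimization of $F(x)=(a'-x)(b'-|\mu|^2/x)$, whose minimum $(\sqrt{a'b'}-|\mu|)^2$ yields (\ref{P}). You instead test $\Phi$ directly on rank-one projectors $|\psi\>\<\psi|$ and carry out a two-parameter extremization over the phase $\chi$ and the amplitude $r=\cot^2\theta$; the key observations — that the $\chi$-maximization decouples because the diagonal entries of $\Phi(|\psi\>\<\psi|)$ are $\chi$-independent and yields $|\lambda|+|\mu|$, and that AM--GM gives $\min_r g(r)=(\sqrt{a_{11}a_{22}}+\sqrt{a_{12}a_{21}})^2$ — are both right, and your handling of the degenerate cases where the minimum is attained only as $r\to 0$ or $r\to\infty$ is adequate since only the infimum of the continuous function $g$ matters. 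Your argument is more elementary and self-contained: it needs neither Woronowicz decomposability nor the auxiliary fact that the Choi matrix of a positive qubit map has at most one negative eigenvalue. What the paper's route buys in exchange is structural information: it produces an explicit decomposition of $C$ into a positive matrix plus a partial transpose of a positive matrix, i.e.\ it exhibits every positive map in the class as a sum of a completely positive and a completely copositive map, which your pure-state computation does not provide.
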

For the proof see an Appendix. It is clear that (\ref{CP}) implies (\ref{P}) but the converse is obviously not true.
If the map (\ref{!}) is unital to simplify notation let us use the following convention

\begin{equation}\label{}
   \left( \begin{array}{cc} a_{11} & a_{12} \\ a_{21} & a_{22}  \end{array} \right) =
    \left( \begin{array}{cc} a & 1-a \\ 1-b & b  \end{array} \right) \ , \ \ \ \ a,b\in[0,1] .
\end{equation}
In this case condition (\ref{P}) reduces to

\begin{equation}\label{P1}
  |\lambda| + |\mu| \leq \sqrt{a b} +  \sqrt{(1-a)(1-b)} .
\end{equation}
Note, that $\Phi$ is bi-stochastic, i.e. both unital and trace-preserving, iff $a=b$. In this case the above condition simplifies to

\begin{equation}\label{P2}
  |\lambda| + |\mu| \leq 1 .
\end{equation}




\section{Schwarz maps with diagonal unitary symmetry}   \label{S-III}

Consider a unital phase-covariant map (\ref{!}) corresponding to $\mu=0$ and  conjugate phase-covariant  map corresponding to $\lambda=0$. The corresponding Choi matrices read as follows

\begin{equation}\label{}
\mu=0 \to \left( \begin{array}{cc|cc} a & . & . & \lambda \\  . & 1-b & . & .  \\ \hline   . & . & 1-a & . \\   \overline{\lambda} & . & . & b \end{array} \right)  \ , \ \ \
\lambda =0 \to \left( \begin{array}{cc|cc} a & . & . & . \\  . & 1-b & \overline{\mu} & .  \\ \hline   . & \mu & 1-a & . \\   . & . & . & b \end{array} \right) .
\end{equation}
Let us observe that it is sufficient to check operator Schwarz inequality (\ref{S!}) for traceless operators only. Indeed, simple algebra shows that

\begin{equation}\label{S0}
  \Phi(X^\dagger X) - \Phi(X^\dagger)\Phi(X)  =  {\Phi(X_0^\dagger X_0) } - \Phi(X_0^\dagger)\Phi(X_0) ,
\end{equation}
where $X=X_0 + z \oper$ and ${\rm Tr}X_0=0$.

\begin{proposition} \label{PRO-S-I} A unital phase covariant map (\ref{!}) satisfies Schwarz inequality if and only if

\begin{equation}\label{S-I}
  |\lambda| \leq \min\{\sqrt{a},\sqrt{b}\} .
\end{equation}
\end{proposition}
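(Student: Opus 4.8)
The plan is to convert the operator Schwarz inequality into ordinary positivity of a single Hermitian $2\times 2$ matrix, using the reduction (\ref{S0}) to traceless arguments. First I would set $\mu=0$ and adopt the unital convention, so that $\Phi$ is the matrix map with $a_{11}=a$, $a_{22}=b$, $a_{12}=1-a$, $a_{21}=1-b$, and parametrize a traceless operator as $X=\begin{pmatrix} x & y \\ z & -x\end{pmatrix}$. Using Hermiticity preservation $\Phi(X^\dagger)=\Phi(X)^\dagger$, a direct computation of
\[
\Delta:=\Phi(X^\dagger X)-\Phi(X^\dagger)\Phi(X)
\]
gives a Hermitian matrix with diagonal entries $\Delta_{11}=4a(1-a)|x|^2+(1-a)|y|^2+(a-t)|z|^2$ and $\Delta_{22}=4b(1-b)|x|^2+(b-t)|y|^2+(1-b)|z|^2$, where $t:=|\lambda|^2$, and off-diagonal entry $\Delta_{12}=2\lambda\big((1-a)\bar x y-(1-b)x\bar z\big)$. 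The Schwarz property is then equivalent to $\Delta\ge 0$ for all $x,y,z$, i.e. to $\Delta_{11}\ge0$, $\Delta_{22}\ge0$ and $\det\Delta\ge0$.

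Necessity is immediate: taking $x=y=0$ collapses $\Delta$ to $\mathrm{diag}\big((a-t)|z|^2,(1-b)|z|^2\big)$, whence $a\ge t$; symmetrically $x=z=0$ yields $b\ge t$, giving (\ref{S-I}). Conversely, assuming $t\le\min\{a,b\}$ all coefficients in $\Delta_{11},\Delta_{22}$ are nonnegative, so the two diagonal conditions hold automatically and only $\det\Delta\ge0$ remains.

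For the determinant I would first fix the moduli $p=|x|^2$, $\sigma=|y|$, $\rho=|z|$ and optimize over phases: since $\Delta_{11},\Delta_{22}$ see only the moduli, the worst case maximizes $|\Delta_{12}|^2$, and choosing the phases so that the two contributions add gives $\max|\Delta_{12}|^2=4tp\big((1-a)\sigma+(1-b)\rho\big)^2$. Thus $\det\Delta\ge0$ for all $X$ becomes the single scalar inequality
\[
\Delta_{11}\Delta_{22}\ \ge\ 4tp\big((1-a)\sigma+(1-b)\rho\big)^2 ,
\]
now polynomial (square-root-free) in $p,\sigma,\rho$. Expanding the difference, every monomial is manifestly nonnegative — the two linear-in-$p$ coefficients are nonnegative by an elementary check using $t\le\min\{a,b\}$, and $a-t,b-t\ge0$ — except the single cross term $-8t(1-a)(1-b)\,p\,\sigma\rho$.

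The crux, which I expect to be the main obstacle, is to absorb this negative term. The clean route is to pair it, via AM--GM, with only the $p^2$-monomial $16ab(1-a)(1-b)p^2$ and the $(\sigma\rho)^2$-monomial carrying coefficient $(1-a)(1-b)+(a-t)(b-t)$; this pairing dominates $8t(1-a)(1-b)p\sigma\rho$ precisely when $ab\big[(1-a)(1-b)+(a-t)(b-t)\big]\ge t^2(1-a)(1-b)$. But $t\le\min\{a,b\}$ forces $t^2\le ab$ and $(a-t)(b-t)\ge0$, so $ab(1-a)(1-b)\ge t^2(1-a)(1-b)$ and the surplus $ab(a-t)(b-t)\ge0$ only helps; hence the required bound holds and, discarding the remaining nonnegative monomials, $\det\Delta\ge0$. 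As a sanity check, equality occurs along the bistochastic line $a=b=t$, where $\Delta$ becomes singular.
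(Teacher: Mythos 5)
Your proof is correct, and its skeleton coincides with the paper's: reduce to traceless $X$ via (\ref{S0}), compute the Hermitian $2\times 2$ matrix $M=\Phi(X^\dagger X)-\Phi(X^\dagger)\Phi(X)$ (your entries $\Delta_{11},\Delta_{22},\Delta_{12}$ agree with the paper's $M$), read off necessity from the purely off-diagonal choices of $X$, and prove sufficiency by checking the diagonal entries and the determinant. The genuine difference is in the determinant step. The paper fixes the extremal value $|\lambda|=\sqrt{\min\{a,b\}}$ ("worst case scenario") and exhibits $\det M$ as a manifestly nonnegative expression via $ab\ge a^2$; this tacitly assumes it suffices to verify the boundary value of $|\lambda|$ (justifiable by phase covariance plus convexity of the set of Schwarz maps, but not spelled out there). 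You instead keep $t=|\lambda|^2\le\min\{a,b\}$ general, optimize the phases of $x,y,z$ to turn $\det\Delta\ge 0$ into a square-root-free polynomial inequality in the moduli, and absorb the single negative monomial $-8t(1-a)(1-b)\,p\,\sigma\rho$ by AM--GM against the $p^2$ and $\sigma^2\rho^2$ terms, using only $t^2\le ab$ and $(a-t)(b-t)\ge 0$. I verified the "elementary check" you defer: the two linear-in-$p$ coefficients equal $4(1-a)\big[b(1+a-b)-t\big]$ and $4(1-b)\big[a(1+b-a)-t\big]$, both nonnegative under $t\le\min\{a,b\}$ and $a,b\in[0,1]$. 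Your route is slightly longer but covers all admissible $|\lambda|$ directly without the extremality reduction, and the phase-optimization device is essentially the one the paper deploys later in the proof of Theorem \ref{TH}.
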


Proof: note that (\ref{S-I}) is necessary. Indeed, if $X= |1\>\<2|$, then $X^\dagger X = |2\>\<1|$ and hence one finds
\begin{equation}\label{}
  M = (b- |\lambda|^2) |2\>\<2|  .
\end{equation}
Positivity of $M$ requires $|\lambda| \leq \sqrt{b}$. Similarly, for $X= |2\>\<1|$ one obtains $|\lambda| \leq \sqrt{a}$. Surprisingly, these conditions are not only necessary but already sufficient. To show this let as assume that $a \leq b$ and take $|\lambda| = \sqrt{a}$. Consider a traceless matrix
\begin{equation}\label{X0}
	X = \left( \begin{array}{cc} z_0 & z_1 \\ z_2 & - z_0 \end{array} \right) ,
\end{equation}
with $z_0,z_1,z_2 \in \mathbb{C}$. It is clear that one can freely multiply $X$ by a complex number and hence one can always fix $f:=z_0$ to be real. If $X$ is diagonal, that is, $z_1=z_2=0$, one gets

\begin{equation}\label{X0}
	M = 4 f^2 \left( \begin{array}{cc} a(1-a) & 0 \\ 0 & b(1-b) \end{array} \right) ,
\end{equation}
which is obviously positive definite since $0\leq a,b\leq 1$. Now, if $z_1,z_2\neq 0$, there are two alternatives: $f =0$ or $f \neq 0$. If $f=0$ one obtains

\begin{equation}\label{}
  M = \left( \begin{array}{cc}
 (1-a) |z_1|^2 + (a-|\lambda|^2)|z_2|^2 & 0 \\
 0 & (1-b) |z_2|^2 + (b-|\lambda|^2)|z_1|^2
\end{array} \right) ,
\end{equation}
and hence $M \geq 0$ if and only if (\ref{S-I}) holds. The last scenario corresponding to $f \neq 0$ does not generate any new constraints. Indeed, assume that $a \leq b$ and check the worst case scenario corresponding to $|\lambda|=\sqrt{a} = \min\{\sqrt{a},\sqrt{b}\}$. One finds for the $M$ matrix

\begin{equation}\label{}
 M = \left(
\begin{array}{cc}
 (1-a) (4 f^2 a + |z_1|^2) &  {2f \sqrt{a} ( [1-a] z_1 - [1-b] \overline{z}_2 ) }\\
  {2f \sqrt{a} ( [1-a] \overline{z}_1 - [1-b] z_2)} & (1-b) (4 f^2 b + |z_2|^2) + (b-a)|z_1|^2 \end{array}
\right) .
\end{equation}
Note, that diagonal elements are non-negative and

\begin{equation}\label{}
  {\rm det}\, M = (1-a)(1-b) A + (b-a)\Big[ 4(1-a)(1-b)f^2|z_1|^2 + (1-a)|z_1|^4 + 4a (1-b)f^2 |z_2|^2\Big]  ,
\end{equation}
with

\begin{equation}\label{}
  A = 16 ab f^4 + |z_1|^2 |z_2|^2 + 4af^2(z_1 z_2 + \overline{z_1 z_2} ) .
\end{equation}
Now, since $ab \geq a^2$ one has

\begin{equation}\label{}
  A \geq 16 a^2 f^4 + |z_1|^2 |z_2|^2 + 4af^2(z_1 z_2 + \overline{z_1 z_2} ) = |4 a f^2 + z_1 z_2 |^2 ,
\end{equation}
and hence ${\rm det}\, M \geq 0$ which ends the proof of Proposition \ref{PRO-S-I}.   \hfill $\Box$

In the bi-stochastic case, i.e. $b=a$ one finds

\begin{equation}\label{Mb=a}
  {\rm det}\, M = (1-a)^2 |4 a f^2 + z_1 z_2 |^2  .
\end{equation}

\begin{remark} \label{R-f=0} Note, that the minimum of ${\rm det}\, M$ can be always realized with $f=0$. Indeed, if $f=0$ and $z_1 =0$, one finds ${\rm det}\, M=0$ for all $z_2$. It shows that the class with $f=0$ is critical to check Schwarz property.
\end{remark}

\begin{corollary} A unital conjugate phase covariant map (\ref{!}) satisfies Schwarz inequality if and only if

\begin{equation}\label{S-II}
  |\mu| \leq \min\{\sqrt{1-a},\sqrt{1-b}\} .
\end{equation}

\end{corollary}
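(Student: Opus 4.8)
The plan is to reduce the conjugate phase-covariant statement to the phase-covariant statement already proved in Proposition \ref{PRO-S-I}, exploiting the structural symmetry between the two Choi matrices displayed above. Comparing the two matrices, one sees that the $\lambda=0$ case is obtained from the $\mu=0$ case by swapping the roles of the transposed block structure, with $\mu$ playing the role of $\lambda$ and the diagonal weights $(a,b)$ replaced by $(1-a,1-b)$. My aim is to make this correspondence into an honest reduction via a conjugation by an involution on $\mathcal{M}_2$.

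First I would exhibit an explicit intertwining map. Consider the (phase-flip) unitary $\sigma_x = \ketbra{1}{2}+\ketbra{2}{1}$ acting on the target space, or equivalently a relabeling $P_1 \leftrightarrow P_2$ together with transposition. The goal is to find a superoperator $\Theta$ (built from transposition and/or Pauli conjugation) so that if $\Phi$ is the conjugate phase-covariant map with parameters $(a,b,\mu)$, then $\Theta\circ\Phi\circ\Theta$ (or $\Phi\circ\Theta$) is the phase-covariant map with parameters $(1-a,1-b,\mu)$. The natural candidate is conjugation by $\sigma_x$ on the output combined with a transposition on the input, because transposition is precisely what exchanges the $\lambda$-type term $P_1 X P_2$ with the $\mu$-type term $P_1 X^{\rm T} P_2$ in \eqref{!}, and $\sigma_x$ exchanges $P_1\leftrightarrow P_2$ so as to send $(a_{11},a_{22})\mapsto(a_{12},a_{21})$, i.e. $(a,b)\mapsto(1-a,1-b)$ in the unital parametrization.

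Next I would observe that the Schwarz property is preserved under such a conjugation, provided the conjugating superoperator is itself a Schwarz (indeed unital Jordan *-automorphism) map with a Schwarz inverse. Conjugation by the unitary $\sigma_x$ is a completely positive unital $*$-automorphism, hence trivially Schwarz-preserving. Transposition is positive and unital but only a Schwarz map in a degenerate sense; however, since transposition $T$ satisfies $T(X^\dagger X)=\overline{X}\,X^{\rm T}=(X^\dagger X)^{\rm T}$ and obeys $T(X)^\dagger T(X)=T(X^\dagger X)$ with equality, the composition identity $\Phi(X^\dagger X)-\Phi(X^\dagger)\Phi(X)$ transforms cleanly under $\Phi\mapsto\Phi\circ T$ when one simultaneously substitutes $X\mapsto X^{\rm T}$. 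Concretely, the Schwarz inequality for $\Phi$ evaluated at all $X$ is equivalent to the Schwarz inequality for $\Phi\circ T$ evaluated at all $X$, because $X\mapsto X^{\rm T}$ is a bijection of $\mathcal{M}_2$ and $(X^{\rm T})^\dagger X^{\rm T}=(X X^\dagger)^{\rm T}$ merely reshuffles the quantified variable.

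Finally, applying Proposition \ref{PRO-S-I} to the transformed phase-covariant map with weights $(1-a,1-b)$ and off-diagonal parameter $\mu$ yields the condition $|\mu|\le\min\{\sqrt{1-a},\sqrt{1-b}\}$, which is exactly \eqref{S-II}. The main obstacle I anticipate is bookkeeping the transposition carefully: transposition is not completely positive and does not preserve the Schwarz inequality on its own, so I must verify that the particular combination used here—transposition on the input together with unitary conjugation—induces a genuine equivalence of the two Schwarz conditions rather than merely a formal symmetry of the Choi matrices. If establishing this equivalence abstractly proves delicate, the safe fallback is a direct computation mirroring the proof of Proposition \ref{PRO-S-I}: take $X=\ketbra{1}{2}$ and $X=\ketbra{2}{1}$ to obtain necessity of $|\mu|\le\sqrt{1-b}$ and $|\mu|\le\sqrt{1-a}$, then repeat the traceless-matrix $2\times2$ determinant analysis with the substitution $(a,b)\to(1-a,1-b)$ and the $\mu$-dependent off-diagonal block, using Remark \ref{R-f=0} to restrict attention to $f=0$.
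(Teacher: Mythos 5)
Your main route contains a step that is false as stated. You claim that ``the Schwarz inequality for $\Phi$ evaluated at all $X$ is equivalent to the Schwarz inequality for $\Phi\circ T$ evaluated at all $X$'' because $X\mapsto X^{\rm T}$ merely reshuffles the quantified variable. It does not: substituting $Y=X^{\rm T}$ turns $\Phi(T(X^\dagger X))$ into $\Phi(YY^\dagger)$, not $\Phi(Y^\dagger Y)$, so the Schwarz inequality for $\Phi\circ T$ is the \emph{reversed} inequality $\Phi(YY^\dagger)\geq \Phi(Y^\dagger)\Phi(Y)$, which is a genuinely different condition. A clean counterexample to your claimed equivalence is $\Phi={\rm id}$: the identity is Schwarz, but ${\rm id}\circ T=T$ is not (the paper itself notes $T$ is not Schwarz). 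There is also a second, smaller slip: your ``natural candidate'' $\sigma_x(\cdot)\sigma_x$ on the output \emph{combined with} transposition on the input does not land in the phase-covariant family at all --- the two operations undo each other's effect on the off-diagonal block, and you end up back in the conjugate phase-covariant class. To your credit, you flag the transposition as the delicate point, but the difficulty is not mere bookkeeping; the intertwiner you propose does not work.

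The irony is that the reduction you are after exists and is simpler than what you wrote: drop the transposition entirely. One checks directly from the matrix form of (\ref{!}) that $\sigma_x\,\Phi^{(a,b,\lambda=0,\mu)}(X)\,\sigma_x=\Phi^{(1-b,1-a,\lambda=\mu,\,\mu=0)}(X)$, i.e.\ output conjugation by $\sigma_x$ alone maps the conjugate phase-covariant map with weights $(a,b)$ to a phase-covariant map with weights $(1-b,1-a)$ (note: $(1-b,1-a)$, not $(1-a,1-b)$ as you wrote --- immaterial for the final $\min$, but worth getting right). Since $X\mapsto U\Phi(X)U^\dagger$ manifestly preserves unitality and the Schwarz inequality, Proposition \ref{PRO-S-I} immediately gives $|\mu|\leq\min\{\sqrt{1-b},\sqrt{1-a}\}$, which is (\ref{S-II}). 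The paper gives no explicit proof of this corollary; the implicit intent is exactly your fallback --- rerun the computation of Proposition \ref{PRO-S-I} with the $\mu$-block (equivalently, specialize Theorem \ref{TH} to $\lambda=0$) --- and that fallback is sound. So the statement is safe, but the argument you actually foreground would not survive scrutiny without replacing the transposition-based intertwiner by the unitary one.
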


\begin{Example} The Choi matrices corresponding to the transposition and reduction maps read

\begin{equation}\label{CHOI}
  C_T = \left( \begin{array}{cc|cc} 1 & . & . & . \\  . & . & 1 & .  \\ \hline   . & 1 & . & . \\   . & . & . & 1 \end{array} \right) , \ \ \
  C_R = \left( \begin{array}{cc|cc} . & . & . & -1 \\  . & 1 & . & .  \\ \hline   . & . & 1 & . \\   -1 & . & . & . \end{array} \right) .
\end{equation}
Hence, neither $T$ nor $R$ is Schwarz. The first example of a Schwarz map in $\mathcal{M}_2$ which is not 2-positive was provided by Choi \cite{Choi2}

\begin{equation}\label{}
  \Phi(X) = \frac 14 \oper {\rm Tr}X + \frac 12 X^{\rm T} .
\end{equation}
The corresponding Choi matrix has the following form

\begin{equation}\label{CHOI}
  C = \left( \begin{array}{cc|cc} 3/4 & . & . & . \\  . & 1/4 & 1/2 & .  \\ \hline   . & 1/2 & 1/4 & . \\   . & . & . & 3/4 \end{array} \right) ,
\end{equation}
and hence this map belongs to our class with $a=b=3/4$, $\lambda=0$, and $\mu = \sqrt{1-a} = 1/2$.
\end{Example}

\begin{remark}

In a recent paper \cite{kS} the authors analyzed two classes of maps $\Psi_+,\Phi_- : \mathcal{M}_n \to \mathcal{M}_n$ defined as follows

\begin{equation}\label{}
  \Phi_-(X) :=    \frac{1}{{\rm Tr}A -1}\,( \oper \, {\rm Tr}(AX)  - X)  , \ \ \
  \Psi_+(X) :=    \frac{1}{{\rm Tr}A +1}\,( \oper \, {\rm Tr}(AX)  + X^\rT) ,
\end{equation}
with $A \in \mathcal{M}_n$. Note, that for $n=2$ and $A = \lambda_1 P_1 + \lambda_2 P_2$, the map $\Phi_-$ is phase-covariant and $\Psi_+$ is conjugate phase-covariant. One proves \cite{kS}

\begin{theorem} \label{TH-1}  $\Phi_-$ is Schwarz  if and only if $A^{-1}$ exists and $\| A^{-1} \|_\infty \leq \frac{{\rm Tr}A-1}{{\rm Tr}A}$.
\end{theorem}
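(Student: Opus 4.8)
The plan is to reduce the operator Schwarz inequality for $\Phi_-$ to a single transparent operator inequality and then read off the stated bound by testing against rank-one operators. Throughout I take $A=A^\dagger$ (this is exactly the condition that $\Phi_-$ be Hermiticity-preserving, without which (\ref{S!}) is not meaningful) and I work in the regime $t:={\rm Tr}\,A>1$, which is the natural one for these ``generalized reduction maps'' (the genuine reduction map is $A=\oper$, $t=n$). I would write $A=\sum_k\lambda_k|k\>\<k|$ in its eigenbasis and set $M:=\Phi_-(X^\dagger X)-\Phi_-(X^\dagger)\Phi_-(X)$.

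First I would exploit unitality. Since $\Phi_-(\oper)=\frac{1}{t-1}({\rm Tr}(A)\oper-\oper)=\oper$, the elementary computation behind (\ref{S0}) applies to any unital map and shows that $M$ is unchanged under $X\mapsto X+z\oper$ for any $z\in\mathbb{C}$. For the specific map $\Phi_-$ this shift sends ${\rm Tr}(AX)\mapsto{\rm Tr}(AX)+zt$, and since $t\neq0$ I may choose $z$ to normalize ${\rm Tr}(AX)=0$. Hence it suffices to verify $M\geq0$ on the (still generic) operators satisfying ${\rm Tr}(AX)=0$. This is the key simplification: a direct computation using $A=A^\dagger$ gives, on such $X$,
\[ (t-1)^2\,M=(t-1)\,{\rm Tr}(AX^\dagger X)\,\oper-t\,X^\dagger X , \]
because the cross terms $\overline{{\rm Tr}(AX)}\,X+{\rm Tr}(AX)\,X^\dagger$ present in the general expression vanish precisely when ${\rm Tr}(AX)=0$. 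Thus $\Phi_-$ is Schwarz if and only if $t\,X^\dagger X\leq(t-1)\,{\rm Tr}(AX^\dagger X)\,\oper$ for every $X$ with ${\rm Tr}(AX)=0$.

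For necessity I would feed in the rank-one operators $X=|i\>\<j|$ with $i\neq j$, which satisfy ${\rm Tr}(AX)=0$ automatically. A short computation gives $M=\frac{\lambda_j}{t-1}\oper-\big(\frac{1}{t-1}+\frac{1}{(t-1)^2}\big)|j\>\<j|$, whose positivity (using $t>1$) forces $\lambda_j\geq0$ from the off-diagonal slots and $(t-1)\lambda_j\geq t$ from the $|j\>$ slot. As this holds for every $j$, all eigenvalues satisfy $\lambda_j\geq\frac{t}{t-1}>0$; in particular $A$ is invertible and $\|A^{-1}\|_\infty=1/\min_j\lambda_j\leq\frac{t-1}{t}$, which is the asserted condition.

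For sufficiency I would show that this same eigenvalue bound makes the displayed inequality hold for all admissible $X$. Bounding $X^\dagger X\leq\lambda_{\max}(X^\dagger X)\,\oper$ and, since $A\geq\lambda_{\min}(A)\,\oper$ with $X^\dagger X\geq0$, ${\rm Tr}(AX^\dagger X)\geq\lambda_{\min}(A)\,{\rm Tr}(X^\dagger X)\geq\lambda_{\min}(A)\,\lambda_{\max}(X^\dagger X)$, one obtains
\[ (t-1)\,{\rm Tr}(AX^\dagger X)\geq(t-1)\,\lambda_{\min}(A)\,\lambda_{\max}(X^\dagger X)\geq t\,\lambda_{\max}(X^\dagger X), \]
the last step being exactly $\lambda_{\min}(A)\geq\frac{t}{t-1}$, i.e. $\|A^{-1}\|_\infty\leq\frac{t-1}{t}$. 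Combining the two estimates yields $t\,X^\dagger X\leq(t-1)\,{\rm Tr}(AX^\dagger X)\,\oper$, hence $M\geq0$. I expect the only genuinely delicate point to be the reduction to ${\rm Tr}(AX)=0$ in the second paragraph: recognizing that unitality already removes the $\oper$-direction and that this freedom can be spent to kill the cross terms is what collapses an a priori messy matrix-positivity problem into the two elementary eigenvalue estimates above; everything after that is routine.
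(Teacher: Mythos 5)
Your proposal is correct, and it takes a genuinely different route from the paper, because the paper in fact contains no proof of Theorem \ref{TH-1}: the statement is imported from \cite{kS}, and the Appendix only verifies that for diagonal $A=\alpha_1P_1+\alpha_2P_2$ on $\mathcal{M}_2$ the norm condition coincides with the condition $|\lambda|\le\min\{\sqrt a,\sqrt b\}$ of Proposition \ref{PRO-S-I}. You instead give a self-contained argument valid for all $n$: you spend the shift-invariance (\ref{S0}) to normalize ${\rm Tr}(AX)=0$ rather than ${\rm Tr}\,X=0$, which collapses $M$ to $(t-1)^{-2}\bigl[(t-1)\,{\rm Tr}(AX^\dagger X)\,\oper-t\,X^\dagger X\bigr]$; the off-diagonal rank-one operators $|i\>\<j|$ then force $A\ge\frac{t}{t-1}\oper$, and the elementary bounds $X^\dagger X\le\lambda_{\max}(X^\dagger X)\oper$ and ${\rm Tr}(AX^\dagger X)\ge\lambda_{\min}(A)\,\lambda_{\max}(X^\dagger X)$ give the converse. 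I checked the algebra in both directions and it is sound. What your route buys is an actual proof of the general-$n$ statement together with the cleaner equivalent form $A\ge\frac{{\rm Tr}A}{{\rm Tr}A-1}\oper$, the exact analogue of Theorem \ref{TH-2}; what the paper's route buys is only a consistency check of the quoted theorem against the independently proven qubit condition (\ref{S-I}). Two caveats you should make explicit. First, you assume ${\rm Tr}\,A>1$ throughout, and the sign of $t-1$ genuinely enters your necessity step, so the argument does not cover the regime ${\rm Tr}\,A<1$ (where the quoted statement would need separate discussion). Second, your sufficiency direction converts $\|A^{-1}\|_\infty\le\frac{t-1}{t}$ into $\lambda_{\min}(A)\ge\frac{t}{t-1}$, which is legitimate only for $A>0$: for a Hermitian $A$ with a negative eigenvalue of large modulus (e.g. eigenvalues $10$ and $-5$) the literal norm condition holds while $\Phi_-$ fails even to be positive. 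Since your necessity direction independently delivers $A\ge\frac{t}{t-1}\oper>0$, what you have actually proved is the precise form of the theorem, with $A\ge0$ (or equivalently the conclusion of your necessity step) understood as the correct reading of the norm condition as quoted.
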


\begin{theorem}  \label{TH-2}

If ${\rm Tr}A > -1$, then $\Psi_+$ is Schwarz if and only if $A \geq \frac{1}{{\rm Tr} A+1} \oper$ .

\end{theorem}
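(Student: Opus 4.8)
The plan is to reduce the Schwarz property of $\Psi_+$ to a single operator inequality and then read off both implications from one clean scalar inequality. Since $\Psi_+(\oper)=\oper$ and (because $A=A^\dagger$) $\operatorname{Tr}(AX^\dagger)=\overline{\operatorname{Tr}(AX)}$, the map is unital and Hermiticity preserving, so $\Psi_+(X^\dagger)=\Psi_+(X)^\dagger$ and $\Psi_+$ is Schwarz iff $M:=\Psi_+(X^\dagger X)-\Psi_+(X^\dagger)\Psi_+(X)\ge0$ for all $X$. Exactly as in \eqref{S0}, unitality gives $M(X)=M(X_0)$ for $X=X_0+z\oper$ with $\operatorname{Tr}X_0=0$, so I may assume $\operatorname{Tr}X=0$ throughout. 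I also record that $A\ge c\oper$ with $c=\tfrac1{\operatorname{Tr}A+1}$ forces $A>0$, hence $\operatorname{Tr}A>0$ and $c\in(0,1)$, and that $B:=A-c\oper\ge0$ satisfies $\operatorname{Tr}B\le\operatorname{Tr}A$.

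A short computation (using $(X^\dagger X)^\rT=X^\rT\overline X$ and $(X^\dagger)^\rT=\overline X$) collapses the six terms of $M$ into the single expression
\[
  M=c\,\operatorname{Tr}(AX^\dagger X)\,\oper+c\,X^\rT\overline X-c^2\big(\overline X+\overline\alpha\,\oper\big)\big(X^\rT+\alpha\,\oper\big),\qquad \alpha:=\operatorname{Tr}(AX),
\]
whose last factor is manifestly $ZZ^\dagger$ with $Z=\overline X+\overline\alpha\,\oper$. Then $M\ge0$ means $\langle\psi|M|\psi\rangle\ge0$ for every unit $\psi$, i.e. $\operatorname{Tr}(AX^\dagger X)+\|\overline X\psi\|^2\ge c\|X^\rT\psi+\alpha\psi\|^2$. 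The only awkwardness is the transposition, which makes $X^\rT\overline X=(\overline X)^\dagger\overline X$ and $\overline X\,X^\rT=\overline X(\overline X)^\dagger$ sit on opposite sides; I remove it by the substitution $\eta:=\overline\psi$, which gives $\|\overline X\psi\|=\|X\eta\|$ and $X^\rT\psi=\overline{X^\dagger\eta}$ and turns the condition into the transposition-free inequality
\begin{equation}\label{eq:PsiSchwarz}
  \operatorname{Tr}(AX^\dagger X)+\|X\eta\|^2\ \ge\ c\,\big\|X^\dagger\eta+\overline\alpha\,\eta\big\|^2,\qquad \|\eta\|=1 .
\end{equation}
For necessity I feed \eqref{eq:PsiSchwarz} the rank-one operator $X=|\phi\rangle\langle e|$, where $e$ is a unit eigenvector of $A$ with eigenvalue $\lambda$ and $\phi\perp e$ is a unit vector (available since $n\ge2$), choosing $\eta=\phi$: then $X\eta=0$, $\alpha=\lambda\langle e|\phi\rangle=0$ and $X^\dagger\eta=e$, so \eqref{eq:PsiSchwarz} collapses to $\lambda\ge c$. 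As $e$ ranges over all eigenvectors this is exactly $A\ge c\oper$.

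For sufficiency I prove \eqref{eq:PsiSchwarz} for all traceless $X$ assuming $A\ge c\oper$. With $A=c\oper+B$, $B\ge0$, a traceless $X$ has $\alpha=\operatorname{Tr}(AX)=\operatorname{Tr}(BX)=:\beta$ and $\operatorname{Tr}(AX^\dagger X)=c\operatorname{Tr}(X^\dagger X)+\operatorname{Tr}(BX^\dagger X)$. The difference of the two sides of \eqref{eq:PsiSchwarz} then splits as
\[
  \underbrace{c\big(\operatorname{Tr}(X^\dagger X)-\|X^\dagger\eta\|^2\big)}_{\ge0}\;+\;\Big(\operatorname{Tr}(BX^\dagger X)+\|X\eta\|^2-2c\Re(\overline\beta\,\pi)-c|\beta|^2\Big),\quad \pi:=\langle\eta|X|\eta\rangle ,
\]
where the first bracket is non-negative because $\|X^\dagger\eta\|^2\le\operatorname{Tr}(X^\dagger X)$. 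Since $|\pi|\le\|X\eta\|$, completing the square gives $\|X\eta\|^2-2c\Re(\overline\beta\pi)\ge-c^2|\beta|^2$, so the second bracket is at least $\operatorname{Tr}(BX^\dagger X)-c(1+c)|\beta|^2$. Finally the Cauchy--Schwarz bound $|\beta|^2=|\operatorname{Tr}(BX)|^2\le\operatorname{Tr}(B)\,\operatorname{Tr}(BX^\dagger X)$ (pairing $B^{1/2}$ with $XB^{1/2}$) reduces everything to the scalar inequality $c(1+c)\operatorname{Tr}(B)\le1$, which holds because $c(1+c)\operatorname{Tr}(B)\le c(1+c)\operatorname{Tr}(A)=\frac{\operatorname{Tr}A(\operatorname{Tr}A+2)}{(\operatorname{Tr}A+1)^2}<1$.

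The step I expect to be the crux is arranging the estimates in the right order, because the single ``budget'' $\operatorname{Tr}(AX^\dagger X)$ must pay both for the transposition term $c\|X^\dagger\eta\|^2$ and for the functional term $\sim c|\alpha|^2$; a single Cauchy--Schwarz over-relaxes and already fails on the boundary $\lambda_{\min}(A)=c$ (e.g. for $X\propto|\eta\rangle\langle\eta|$, which is \emph{not} traceless). The two devices that make it close are (i) the reduction to traceless $X$, which replaces $\alpha$ by $\beta=\operatorname{Tr}(BX)$ and lets the $c\oper$-part of $A$ settle $\|X^\dagger\eta\|^2$ while the $B$-part settles $|\beta|^2$, and (ii) completing the square in $\pi$ against $\|X\eta\|^2$. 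As a consistency check I would specialise to $n=2$ and diagonal $A=\mathrm{diag}(\lambda_1,\lambda_2)$: there $\Psi_+$ is conjugate phase-covariant with $|\mu|=c$, $1-a=c\lambda_2$, $1-b=c\lambda_1$, and $A\ge c\oper$ reduces to $|\mu|\le\min\{\sqrt{1-a},\sqrt{1-b}\}$, recovering the corollary to Proposition~\ref{PRO-S-I} (condition \eqref{S-II}).
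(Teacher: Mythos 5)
Your proof is correct, but it is a genuinely different — and substantially more complete — route than what the paper offers. The paper does not prove Theorem \ref{TH-2} at all: it imports it from \cite{kS} and, in Appendix B, merely verifies that for the diagonal qubit case $A=\alpha_1P_1+\alpha_2P_2$ the condition $A\geq \oper/({\rm Tr}A+1)$ coincides with its own criterion (\ref{S-II}), $|\mu|\leq\min\{\sqrt{1-a},\sqrt{1-b}\}$, obtained as a corollary of Proposition \ref{PRO-S-I}. You instead give a self-contained proof valid for every dimension $n$ and every Hermitian $A$. I checked the key steps and they all hold: the collapse of $M$ to $c\,{\rm Tr}(AX^\dagger X)\oper+cX^\rT\overline X-c^2 ZZ^\dagger$ with $Z=\overline X+\overline\alpha\oper$; the substitution $\eta=\overline\psi$ that removes the transpose; the necessity test $X=|\phi\>\<e|$, $\eta=\phi$ yielding $\lambda\geq c$ eigenvalue by eigenvalue; and the sufficiency chain, where $\|X^\dagger\eta\|^2\leq{\rm Tr}(X^\dagger X)$ absorbs the $c\oper$-part of $A$, the square-completion in $\pi=\<\eta|X|\eta\>$ gives $\|X\eta\|^2-2c\Re(\overline\beta\pi)\geq-c^2|\beta|^2$, Hilbert--Schmidt Cauchy--Schwarz gives $|{\rm Tr}(BX)|^2\leq{\rm Tr}(B)\,{\rm Tr}(BX^\dagger X)$, and $c(1+c)\,{\rm Tr}A=1-({\rm Tr}A+1)^{-2}<1$ closes the estimate (the reduction to traceless $X$ via (\ref{S0}) is essential here, exactly as you flag). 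What each approach buys: the paper's route is short and stays entirely within its $(a,b,\lambda,\mu)$ parametrization, but it covers only $n=2$ with diagonal $A$ and leaves the general statement to the external reference; your argument establishes the theorem in full generality and, as your final consistency check shows, reproduces the paper's Appendix B computation as a special case. The only (non-essential) caveats are that $n\geq2$ is needed for the necessity witness and that Hermiticity of $A$ is implicitly assumed, both of which are already implicit in the statement of the theorem.
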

In Appendix we show that for $A = \alpha_1 P_1 + \alpha_2 P_2$ Theorem \ref{TH-1} is equivalent to Proposition (\ref{S-I})  and Theorem \ref{TH-2} is equivalent to (\ref{S-II}). It should be stressed, however, that $\Phi_-$ and $\Psi_+$ are just special examples of maps from the family (\ref{!}). Interestingly, authors of \cite{kS} characterized the properties of a matrix $A$ which guarantee that maps $\Phi_-$ and $\Psi_+$ are $k$-Schwarz. Recall, that $\Phi : \mathcal{M}_n \to \mathcal{M}_n$ is $k$-Schwarz if ${\rm id}_k \otimes \Phi$ is Schwarz. For $n=2$ the only nontrivial case corresponds to $k=1$ since a map which is 2-Schwarz is already 2-positive.

\end{remark}

\section{Schwarz maps with diagonal orthogonal symmetry}   \label{S-IV}

In this section we show how to generalize conditions (\ref{S-I}) and (\ref{S-II}) for maps satisfying (\ref{III}), i.e. represented by (\ref{!}).
We already found that both positivity, complete positivity, and Schwarz property for maps with diagonal unitary symmetry is fully characterized in terms of $|\lambda|$ and $|\mu|$. The same property holds for diagonal orthogonal symmetry due to the following

\begin{proposition} A unital map $\Phi$ parameterized by $(a,b,\lambda,\mu)$ is Schwarz if and only if a map parameterized by $(a,b,|\lambda|,|\mu|)$ is Schwarz.
\end{proposition}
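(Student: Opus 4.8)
The plan is to realize the map with parameters $(a,b,\lambda,\mu)$ and the map with $(a,b,|\lambda|,|\mu|)$ as conjugates of one another under diagonal unitaries acting on both sides, and to observe that such a two-sided conjugation preserves the Schwarz property. For a unitary $W$ write $\mathcal{U}_W(X)=WXW^\dagger$; this is a unital $*$-automorphism, so it saturates the Schwarz inequality, $\mathcal{U}_W(Y^\dagger Y)=\mathcal{U}_W(Y)^\dagger\mathcal{U}_W(Y)$, and preserves operator order. Hence for any $\Phi$ the composite $\widetilde\Phi:=\mathcal{U}_U\circ\Phi\circ\mathcal{U}_V$ obeys, with $Y=\mathcal{U}_V(X)=VXV^\dagger$,
\begin{equation}
\widetilde\Phi(X^\dagger X)=\mathcal{U}_U\big(\Phi(Y^\dagger Y)\big)\ \geq\ \mathcal{U}_U\big(\Phi(Y^\dagger)\Phi(Y)\big)=\widetilde\Phi(X^\dagger)\widetilde\Phi(X)
\end{equation}
whenever $\Phi$ is Schwarz; since $U,V$ are invertible the converse follows by the same argument with $U^\dagger,V^\dagger$, so $\widetilde\Phi$ is Schwarz if and only if $\Phi$ is. Unitality is preserved as well, since $\mathcal{U}_V(\oper)=\oper$ and $\mathcal{U}_U(\oper)=\oper$.

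Next I would compute the action of diagonal unitaries $U=\mathrm{diag}(u_1,u_2)$ and $V=\mathrm{diag}(v_1,v_2)$ on the parameters of (\ref{!}). Setting $s=u_1\overline{u_2}$ and $w=v_1\overline{v_2}$ (both of unit modulus), a direct inspection of the matrix form of (\ref{!}) shows that $\mathcal{U}_V$ multiplies $X_{12}$ by $w$ and $X_{21}$ by $\overline{w}$, while $\mathcal{U}_U$ multiplies the off-diagonal entries of the output by $s$ and $\overline{s}$ respectively. The diagonal coefficients $a,b$ are untouched, and the net effect is
\begin{equation}
\lambda\ \longmapsto\ s\,w\,\lambda,\qquad \mu\ \longmapsto\ \overline{s}\,w\,\mu .
\end{equation}
In particular $\widetilde\Phi$ again lies in the family (\ref{!}) with the same pair $(a,b)$, only the phases of $\lambda$ and $\mu$ being altered.

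It then remains to choose the phases so that both off-diagonal parameters become real and non-negative. Writing $s=e^{i\sigma}$, $w=e^{i\omega}$, $\lambda=|\lambda|e^{i\alpha}$, $\mu=|\mu|e^{i\beta}$, the requirements $sw\lambda=|\lambda|$ and $\overline{s}w\mu=|\mu|$ reduce to the linear system $\sigma+\omega+\alpha\equiv 0$ and $-\sigma+\omega+\beta\equiv 0\ (\mathrm{mod}\ 2\pi)$, solved by $\omega=-(\alpha+\beta)/2$ and $\sigma=(\beta-\alpha)/2$; when $\lambda$ or $\mu$ vanishes the corresponding phase is immaterial. With this choice $\widetilde\Phi$ is exactly the map parameterized by $(a,b,|\lambda|,|\mu|)$, and since $\widetilde\Phi$ is Schwarz iff $\Phi$ is, the proposition follows.

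I do not expect a genuine obstacle here: the conceptual key is simply recognizing that a \emph{two-sided} diagonal conjugation is needed (a single-sided one can fix only one of the two phases), after which the Schwarz-preservation step is standard and the phase matching is a trivial $2\times 2$ linear solve. The only real bookkeeping is confirming the transformation law $\lambda\mapsto sw\lambda$, $\mu\mapsto\overline{s}w\mu$ and, with it, the stability of the form (\ref{!}) under these conjugations, which also guarantees that one never leaves the covariant class.
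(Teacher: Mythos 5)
Your proposal is correct and follows exactly the paper's approach: conjugate $\Phi$ on both sides by diagonal unitaries, note that this preserves the Schwarz property and the form (\ref{!}) with $(a,b)$ fixed, and choose the two free phases to rotate $\lambda$ and $\mu$ onto $|\lambda|$ and $|\mu|$. You simply spell out the transformation law $\lambda\mapsto sw\lambda$, $\mu\mapsto\overline{s}w\mu$ and the phase solve that the paper leaves as "one easily finds."
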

Proof: A unital map $\Phi$ is Schwarz if $\widetilde{\Phi}(X) := U\Phi(V X V^\dagger) U^\dagger$ is Schwarz for arbitrary unitaries $U$ and $V$. Now, for the class of maps (\ref{!}) one easily finds diagonal unitaries $U$ and $V$ such that if $\Phi$ parameterized by $(a,b,\lambda,\mu)$, then $\widetilde{\Phi}$ parameterized by $(a,b,|\lambda|,|\mu|)$. \hfill $\Box$.

Now, it is easy to find the following necessary conditions

\begin{proposition}\label{Sch}
    If $\Phi$ is Schwarz then $\vert \lambda \vert \leq {\rm min} \lbrace \sqrt{a}, \sqrt{b} \rbrace $ and $\vert \mu \vert \leq {\rm min} \lbrace \sqrt{1-a}, \sqrt{1-b} \rbrace$.
\end{proposition}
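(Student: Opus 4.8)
The plan is to establish the two bounds as necessary conditions by testing the operator Schwarz inequality (\ref{S!}) on the two off-diagonal matrix units, exactly as in the necessity half of the proof of Proposition \ref{PRO-S-I}, but now keeping both $\lambda$ and $\mu$ switched on. Since only $|\lambda|^2$ and $|\mu|^2$ will enter the relevant entries, I need not even invoke the preceding reduction to nonnegative parameters, although it would license assuming $\lambda,\mu\ge 0$ if convenient.

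First I would record the Schwarz defect $M:=\Phi(X^\dagger X)-\Phi(X^\dagger)\Phi(X)$ in matrix form, reading $\Phi$ off the matrix representation of (\ref{!}) with the unital convention $a_{11}=a$, $a_{12}=1-a$, $a_{21}=1-b$, $a_{22}=b$. Taking $X=|1\>\<2|$ gives $X^\dagger X=P_2$, so that $\Phi(X^\dagger X)=\mathrm{diag}(1-a,\,b)$, while $\Phi(X)=\lambda|1\>\<2|+\mu|2\>\<1|$ and $\Phi(X^\dagger)=\Phi(X)^\dagger$; a direct multiplication yields $\Phi(X^\dagger)\Phi(X)=\mathrm{diag}(|\mu|^2,|\lambda|^2)$. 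Hence $M=\mathrm{diag}(1-a-|\mu|^2,\,b-|\lambda|^2)$, and positivity of $M$ forces $|\lambda|\le\sqrt{b}$ and $|\mu|\le\sqrt{1-a}$.

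Symmetrically, taking $X=|2\>\<1|$ gives $X^\dagger X=P_1$, $\Phi(X^\dagger X)=\mathrm{diag}(a,\,1-b)$, and $\Phi(X^\dagger)\Phi(X)=\mathrm{diag}(|\lambda|^2,|\mu|^2)$, so that $M=\mathrm{diag}(a-|\lambda|^2,\,1-b-|\mu|^2)\ge 0$ forces $|\lambda|\le\sqrt{a}$ and $|\mu|\le\sqrt{1-b}$. Combining the two pairs of inequalities yields $|\lambda|\le\min\{\sqrt a,\sqrt b\}$ and $|\mu|\le\min\{\sqrt{1-a},\sqrt{1-b}\}$, as claimed.

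There is no genuine obstacle in this direction: because the two test operators are diagonalized after the map, $M$ is itself diagonal and its positivity reduces to inspecting two scalar entries. The real difficulty lies not in this necessity statement but in whether these conditions remain sufficient once both $\lambda$ and $\mu$ act simultaneously, where the cross term $\lambda X_{12}+\overline{\mu}X_{21}$ in the off-diagonal block of $\Phi(X)$ couples the two contributions; I would expect that converse, and the attendant determinant analysis analogous to the $f\neq 0$ case of Proposition \ref{PRO-S-I}, to be the substance of the subsequent theorem rather than of this proposition.
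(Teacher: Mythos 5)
Your proof is correct and follows exactly the paper's own argument: test the Schwarz inequality on $X=|1\>\<2|$ and $X=|2\>\<1|$, observe that the defect $M$ is diagonal with entries $1-a-|\mu|^2$, $b-|\lambda|^2$ and $a-|\lambda|^2$, $1-b-|\mu|^2$ respectively, and read off the four inequalities. The computations match the paper's equations (\ref{SN1}) and (\ref{SN2}) entry for entry.
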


\begin{proof}
    Consider $X= |1\>\<2|$. One finds
    \begin{equation}\label{SN1}
        M = \Phi (X^\dagger X)-\Phi(X^\dagger)\Phi(X)= \left( \begin{array}{cc} (1-a)-\vert \mu \vert^2 & 0 \\ 0 & b-\vert \lambda \vert^2  \end{array} \right)
    \end{equation}
    Now if $\Phi$ is Schwarz then (\ref{SN1}) is positive semidefinite and hence $\vert \mu \vert^2 \leq (1-a)$ and $\vert \lambda \vert^2 \leq b $.     Similarly, considering $X=|2\>\<1|$ one obtains

    \begin{equation}\label{SN2}
       M= \Phi (X^\dagger X)-\Phi(X^\dagger)\Phi(X)= \left( \begin{array}{cc} a-\vert \lambda \vert^2 & 0 \\ 0 & (1-b)-\vert \mu \vert^2  \end{array} \right)
\end{equation}
If $\Phi$ is Schwarz, then (\ref{SN2}) is positive semidefinite which gives $\vert \lambda \vert^2 \leq a$ and $\vert \mu \vert^2 \leq (1-b) $. \end{proof}
The above condition are necessary but not sufficient. For example, assuming $a \leq b$,  the map corresponding to $|\lambda| = \sqrt{a}$ and $|\mu|=\sqrt{1-b}$ is not even positive since

$$  |\lambda| + |\mu| = \sqrt{a} + \sqrt{1-b} > \sqrt{a b} + \sqrt{(1-a)(1-b)} , $$
and hence the condition (\ref{P1}) is violated.

The main result of this paper consists in the following

\begin{theorem} \label{TH} A unital map (\ref{!}) satisfies Schwarz inequality if and only if

\begin{equation}\label{elipse}
  \frac{|\lambda|^2}{a} +  \frac{|\mu|^2}{1- a} \leq 1 , \ \ \ {\rm and} \ \ \
 \frac{|\lambda|^2}{b} +  \frac{|\mu|^2}{1- b} \leq 1 .
\end{equation}
\end{theorem}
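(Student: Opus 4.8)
The plan is to lean on the two reductions already in place. By the reduction proposition preceding Proposition \ref{Sch} I may assume $\lambda,\mu\ge 0$, and by (\ref{S0}) it suffices to test (\ref{S!}) on traceless operators $X = z_0(|1\>\<1| - |2\>\<2|) + z_1 |1\>\<2| + z_2 |2\>\<1|$. I would then compute $M=\Phi(X^\dagger X)-\Phi(X^\dagger)\Phi(X)$ straight from the matrix form of $\Phi$, using $\Phi(X^\dagger)=\Phi(X)^\dagger$. The purpose of the computation is not the entries themselves but their \emph{structure}: with $v=(z_1,z_2)^{\rT}$ the diagonal entries split as $M_{11}=4a(1-a)|z_0|^2 + v^\dagger N_a v$ and $M_{22}=4b(1-b)|z_0|^2 + v^\dagger N_b v$, where
\[
N_a=\begin{pmatrix}1-a-\mu^2 & -\lambda\mu\\ -\lambda\mu & a-\lambda^2\end{pmatrix},\qquad
N_b=\begin{pmatrix}b-\lambda^2 & -\lambda\mu\\ -\lambda\mu & 1-b-\mu^2\end{pmatrix},
\]
while the off-diagonal entry is linear in $z_0,\overline{z_0}$, say $M_{12}=\overline{z_0}\,U+z_0\,V$ with $U,V$ depending only on $v$.

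Necessity is then immediate: putting $z_0=0$ makes $M$ diagonal with entries $v^\dagger N_a v$ and $v^\dagger N_b v$, so $M\ge 0$ for all such $X$ forces $N_a\ge 0$ and $N_b\ge 0$. For $0<a,b<1$ the conditions $\det N_a\ge 0$ and $\det N_b\ge 0$ are exactly (\ref{elipse}), and since each ellipse inequality already forces $\lambda^2\le a$, $\mu^2\le 1-a$ (and the $b$-analogues), i.e. the diagonal entries of the corresponding $N$ are nonnegative, (\ref{elipse}) is equivalent to $N_a,N_b\ge 0$. The degenerate values $a,b\in\{0,1\}$ are read off directly from the same diagonal forms.

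For sufficiency I assume (\ref{elipse}), equivalently $N_a,N_b\ge 0$. Nonnegativity of the diagonal of $M$ is then free, since $v^\dagger N_a v,\,v^\dagger N_b v\ge 0$ and the added terms $4a(1-a)|z_0|^2,\,4b(1-b)|z_0|^2$ are nonnegative, so $M\ge 0$ collapses to $\det M\ge 0$. Writing $t=|z_0|^2$ and maximizing over the phase of $z_0$, which is the only place $z_0^2,\overline{z_0}^2$ enter, gives $|M_{12}|^2\le t(|U|+|V|)^2$; hence $\det M$ becomes an upward quadratic $\alpha t^2+\beta t+\gamma$ in $t\ge0$ with $\alpha=16\,ab(1-a)(1-b)\ge 0$, $\gamma=(v^\dagger N_a v)(v^\dagger N_b v)\ge 0$, and $\beta=4a(1-a)\,v^\dagger N_b v+4b(1-b)\,v^\dagger N_a v-(|U|+|V|)^2$. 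Testing all $X$ means testing all $t\ge0$ and all phases, so $\det M\ge0$ is equivalent to: for every $v$, either $\beta\ge 0$ or $\beta^2\le 4\alpha\gamma$.

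The hard part is precisely this scalar inequality, the analogue of the bound $A\ge|4af^2+z_1z_2|^2$ used in Proposition \ref{PRO-S-I}, now forced to absorb both $\lambda$ and $\mu$ and to invoke the two ellipse constraints at once. I expect to prove it by re-expressing $\gamma$, $\beta$ and $(|U|+|V|)^2$ through the entries of $N_a,N_b$ and completing squares, so that $\det N_a,\det N_b\ge 0$ (together with elementary bounds of the type $ab\ge\min\{a,b\}^2$ already exploited in the $\mu=0$ case) make the discriminant condition $\beta^2\le 4\alpha\gamma$ manifest. A check on the bi-stochastic $\mu=0$ specialization shows this inequality is in fact saturated, consistent with Remark \ref{R-f=0}, which locates the vanishing of $\det M$ at $z_0=0$ with $z_1=0$. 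The genuine obstacle is therefore verifying that the $z_0$-dependent off-diagonal growth in $M_{12}$ never outpaces the $|z_0|^2$-gain in the diagonal — that is, that $z_0\neq 0$ produces no constraint beyond (\ref{elipse}) — which is an algebraic sum-of-squares identity rather than a new optimization.
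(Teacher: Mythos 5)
Your reduction and your treatment of the $z_0=0$ sector are correct and essentially identical to the paper's: the matrices $N_a,N_b$ are a clean repackaging of the paper's worst-phase-plus-discriminant argument (indeed $\det N_a = a(1-a)\bigl(1-\tfrac{\lambda^2}{a}-\tfrac{\mu^2}{1-a}\bigr)$), and the equivalence of (\ref{elipse}) with $N_a,N_b\ge 0$ gives necessity cleanly. The setup for $z_0\ne 0$ --- nonnegativity of the diagonal for free, reduction to $\det M\ge 0$, and the quadratic $\alpha t^2+\beta t+\gamma$ in $t=|z_0|^2$ with $\alpha=16ab(1-a)(1-b)$ and $\gamma=(v^\dagger N_a v)(v^\dagger N_b v)$ --- is also sound.

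The gap is that the one step carrying all the difficulty, the scalar inequality ``for every $v$, either $\beta\ge 0$ or $\beta^2\le 4\alpha\gamma$'', is never proved: you announce that you \emph{expect} to obtain it by completing squares from $\det N_a,\det N_b\ge 0$, but no such identity is exhibited, and it is not obvious that one exists in the stated generality. For what it is worth, the paper closes this step only partially as well: in the bistochastic case $a=b$ it produces, after ``tedious algebra'', an explicit perfect-square formula for $\det M$ on the boundary of the ellipse (reducing to (\ref{Mb=a}) when $\mu=0$), while for $a\ne b$ it falls back on numerical verification. So your proposal stops exactly where the published argument is weakest. To count as a complete proof you would need to actually produce the sum-of-squares (or other) certificate for $\beta^2\le 4\alpha\gamma$ on the boundary of (\ref{elipse}) for general $a\ne b$ --- which is more than the paper itself does analytically --- or at minimum reproduce the explicit perfect-square identity in the $a=b$ case, which you only assert ``a check shows'' without displaying.
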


\begin{remark}
Note, that for a given pair $(a,b)$ a set of maps satisfying (\ref{elipse}) is strictly larger than a set of completely positive maps represented by a green rectangle on Figure \ref{FIG}. The above elliptic conditions (\ref{elipse}) show how much one can go beyond this rectangle:

\begin{eqnarray}\label{elipse-12}
  && \frac{|\lambda|^2}{a} +  \frac{|\mu|^2}{1- a} \leq 1 , \ \ \ {\rm for} \ \ |\mu| \leq \sqrt{(1-a)(1-b)}\ ,  \\
&& \frac{|\lambda|^2}{b} +  \frac{|\mu|^2}{1- b} \leq 1 , \ \ \ {\rm for} \ \ |\lambda| \leq \sqrt{ab} \ .
\end{eqnarray}

\end{remark}
Proof of the theorem: for diagonal $X$ the proof is the same as of Proposition \ref{PRO-S-I} since in this case $\lambda$ and $\mu$ are irrelevant. For $X$ defined by (\ref{X0}) with $z_0=0$ one obtains

\begin{equation}\label{}
       M = \left( \begin{array}{cc} (1-a)|z_1|^2 + a|z_2|^2 - |\mu z_1 + \lambda z_2|^2  & 0 \\ 0 & (1-b)|z_2|^2 + b |z_1|^2 - |\lambda z_1 + \mu z_2|^2   \end{array} \right) .
\end{equation}
It is clear that $M \geq 0$ if and only if both diagonal elements are non-negative. Consider $M_{11} = (1-a)|z_1|^2 + a|z_2|^2 - |\mu z_1 + \lambda z_2|^2$. The worst case scenario corresponds to $z_1$ and $z_2$ such that

\begin{equation}\label{}
  |\mu z_1 + \lambda z_2| = |\mu| |z_1| + |\lambda| |z_2| .
\end{equation}
Now, let $|z_2| = x |z_1|$ with $x \geq 0$. One finds

\begin{equation}\label{}
  M_{11} = |z_1|^2 \Big( 1-a + a x^2 - (|\mu| + x |\lambda|)^2 \Big) ,
\end{equation}
and hence positivity of $M_{11}$ gives rise to the following quadratic inequality for the real parameter `$x$'

\begin{equation}\label{}
  (a-|\lambda|^2) x^2 - 2 |\lambda||\mu|x + (1-a-|\mu|^2) \geq 0 .
\end{equation}
The above inequality holds if and only if the corresponding discriminant

\begin{equation}\label{}
  \Delta = 4  |\lambda|^2|\mu|^2 - 4 (a-|\lambda|^2)(1-a-|\mu|^2) = 4a(1-a)\Big(\frac{|\lambda|^2}{a} +  \frac{|\mu|^2}{1- a} -1 \Big)  ,
\end{equation}
satisfies $\Delta \leq 0$, which is equivalent to

\begin{equation}\label{}
  \frac{|\lambda|^2}{a} +  \frac{|\mu|^2}{1- a} \leq 1 .
\end{equation}
Similarly, $M_{22}\geq 0$ if and only if

\begin{equation}\label{}
  \frac{|\lambda|^2}{b} +  \frac{|\mu|^2}{1- b} \leq 1 .
\end{equation}
Hence, the class of $X$ with $f=0$ already give rise to (\ref{elipse}). Clearly, if $\mu=0$ they reduce to $|\lambda|\leq \min\{\sqrt{a},\sqrt{b}\}$, and if $\lambda=0$ they reduce to $|\mu|\leq \min\{\sqrt{1-a},\sqrt{1-b}\}$.

Now, adding nontrivial diagonal elements (i.e. $f\neq 0$) one does not produce any new constraints.  One finds for the $M$ matrix

\begin{eqnarray*}
  M_{11} &=& 4a(1-a)f^2 + (1-a)|z_1|^2 + a|z_2|^2 -|\lambda \overline{z}_2 + \mu \overline{z}_1|^2 , \\
   M_{22} &=& 4b(1-b)f^2 + (1-b)|z_2|^2 + b|z_1|^2 -|\lambda {z}_1 + \mu {z}_2|^2 , \\
    M_{12} &=&{ -2f \left\{ \lambda\Big(  [1-b] \overline{z}_2  -[1-a] z_1\Big) + \overline{\mu} \Big( a z_2 -b \overline{z}_1\Big)  \right\} } = \overline{M_{21}} .
\end{eqnarray*}
Suppose that one of the ellipse conditions (\ref{elipse}) is saturated, e.g.

\begin{equation}\label{elipse-1}
  \frac{|\lambda|^2}{a} +  \frac{|\mu|^2}{1- a} = 1 ,
\end{equation}
that is,

\begin{equation}\label{elipse-1}
  |\mu| = \sqrt{ (1-a)\left( 1 - \frac{|\lambda|^2}{a}\right) }   .
\end{equation}
Let us observe that the diagonal elements $M_{11}$ and $M_{22}$ are non-negative. Indeed, the diagonal elements are the same as in (\ref{SN2}) but shifted by $4a(1-a)f^2$ for $M_{11}$ and $4b(1-b)f^2$ for $M_{22}$, respectively. Hence, to prove positivity of $M$ it is sufficient to show that ${\rm det}M \geq 0$. If $b=a$ (bistochastic case) after a  tedious algebra one finds

\begin{equation}\label{}
  {\rm det} M = \Big| 4a(1-a)f^2 + \Big( \frac{a^2-2a|\lambda|^2 + |\lambda|^2}{a} z_1 z_2 - |\lambda| \sqrt{ \frac{1-a}{a}} \sqrt{a - |\lambda|^2} (z_1^2 + z_2^2) \Big) \Big|^2 .
\end{equation}
Note, that if $\mu=0$, that is, $|\lambda| = \sqrt{a}$, the above formula reduces to (\ref{Mb=a}). For $b\neq a$ the analysis of ${\rm det} M$ becomes quite complex but numerical analysis perfectly confirms that if the ellipse condition (\ref{elipse}) is saturated then the minimal value of ${\rm det} M$ equals to zero and is realized by $X$ with $f=0$.

For any fixed $a,b \in [0,1]$ the corresponding regions of positivity, Schwarz property and complete positive can be nicely represented on the $(|\lambda|,|\mu|)$ plane,  cf. Figure \ref{FIG}.

\begin{figure}[!htb]
\minipage{0.32\textwidth}
  \includegraphics[width=5.1cm]{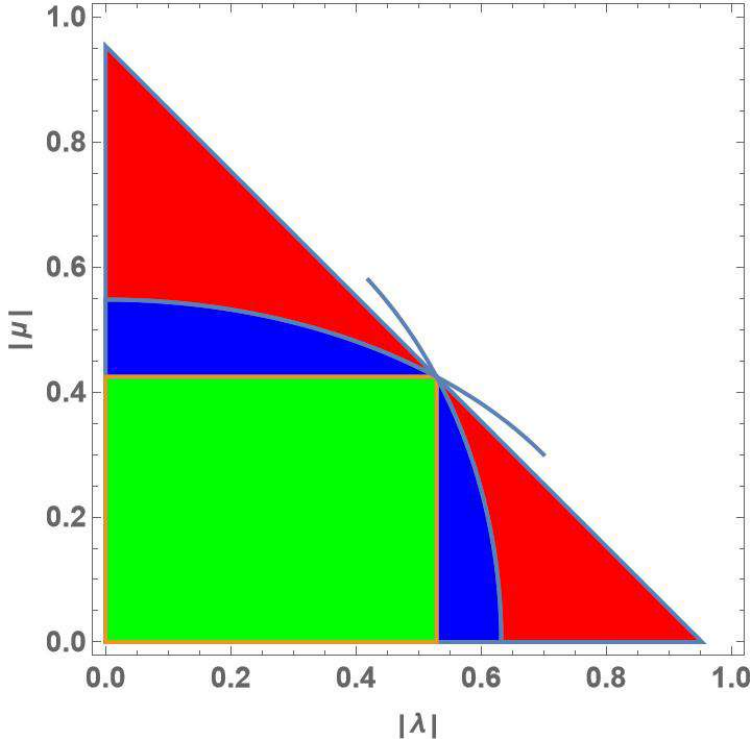}
\endminipage\hfill
\minipage{0.32\textwidth}
  \includegraphics[width=5.1cm]{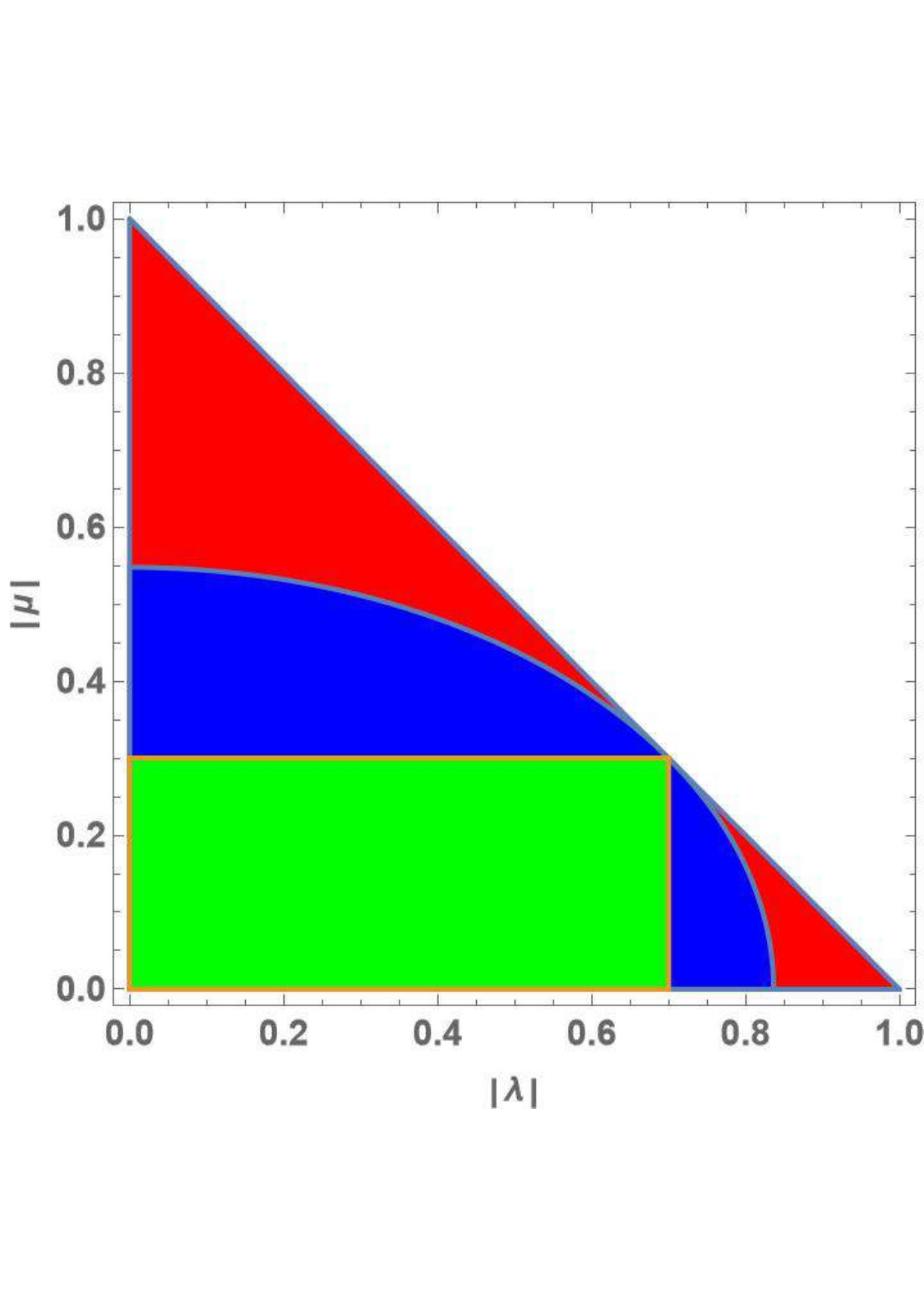}
\endminipage\hfill
\minipage{0.32\textwidth}%
  \includegraphics[width=5.1cm]{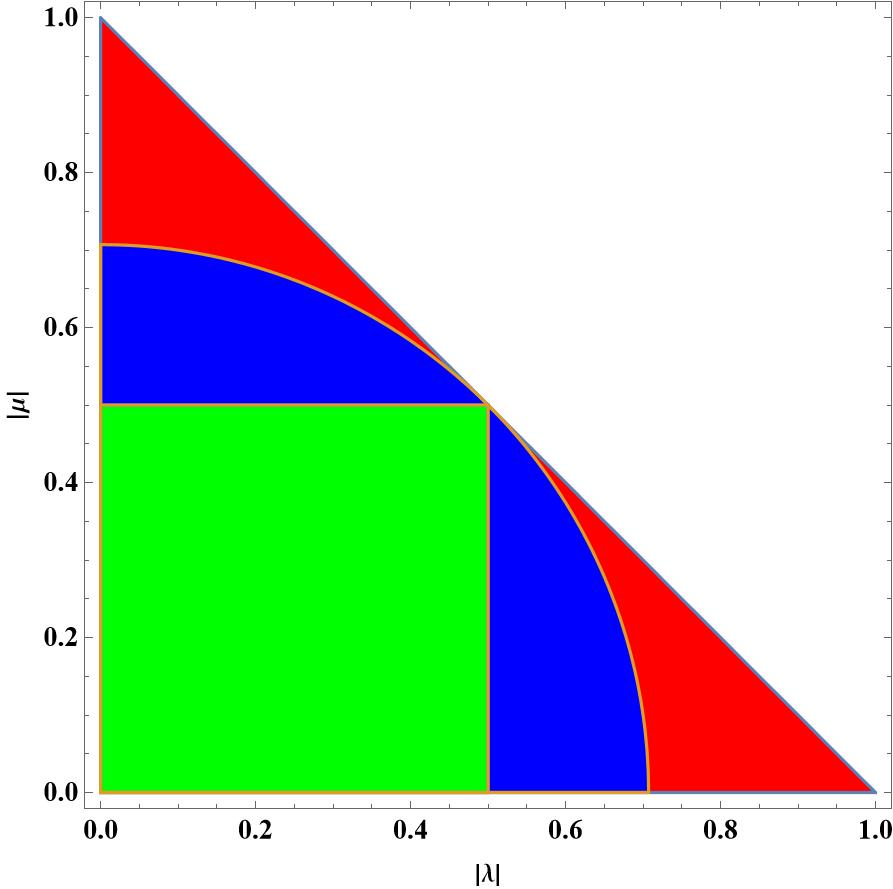}
\endminipage
\caption{(Color online) Green regions represent completely positive maps; blue regions represent Schwarz maps which are not completely positive; red regions represent positive maps which are not Schwarz. {Left plot: $a=0.7 $ $b = 0.4$} --- two ellipses intersect at $(\sqrt{ab},\sqrt{(1-a)(1-b)})$. {Middle plot: $a=b=0.7 $} (bistochastic) two ellipses reduce to a single ellipse which touches the border of positive maps at $(a,1-a)$. Right plot: $a=b=0.5$ --- a single ellipse becomes a circle and a green rectangle of completely positive maps becomes a square.   \label{FIG}}
\end{figure}

\section{Pauli maps: a case study}   \label{S-Pauli}


Pauli maps provide a prominent example of maps from the family (\ref{!}). Any Pauli map can be represented via

\begin{equation}\label{Pauli}
  \Phi(X) = \sum_{\alpha=0}^3 p_\alpha \sigma_\alpha X \sigma_\alpha ,
\end{equation}
with $p_\alpha \in \mathbb{R}$. Note that a Pauli map is trace-preserving (and at the same time  unital) if $\sum_\alpha p_\alpha =1$. The corresponding Choi matrix reads

\begin{equation}\label{}
  C = \left( \begin{array}{cc|cc} p_0+p_3 & . & . & p_0-p_3 \\  . & p_1+p_2 & p_1-p_2 & .  \\ \hline   . & p_1-p_2 & p_1+p_2 & . \\  p_0-p_3 & . & . & p_0+p_3 \end{array} \right) ,
\end{equation}
and hence the map corresponds to

$$   a=b = p_0+p_3\ , \ \ \lambda = p_0-p_3 \ , \ \ \mu = p_1 - p_2 .   $$
The map is completely positive iff $p_\alpha \geq 0$. It is positive iff

\begin{equation}\label{}
  |p_0-p_3| + |p_1-p_2| \leq 1 ,
\end{equation}
and Schwarz iff

\begin{equation}\label{pppp}
  \frac{(p_0-p_3)^2}{p_0+p_3} + \frac{(p_1-p_2)^2}{p_1+p_2} \leq 1 .
\end{equation}
Note, that for a Pauli map if we fix $a\in [0,1]$, then the map is fully characterized by two parameters, e.g. $(p_0,p_1)$:

$$   p_3 = a - p_0 \ ,  \ \ \ p_2 = 1-a - p_1 . $$
Then the ellipse condition (\ref{pppp}) can be rewritten as follows

\begin{equation}\label{p2p2}
  \frac{\left(p_0 - \frac a2 \right)^2}{\frac{a}{4}} +   \frac{\left(p_1 - \frac{1-a}{2}\right)^2}{\frac{1-a}{4}} \leq 1 ,
\end{equation}
and can be nicely represented on the  $(p_0,p_1)$-plane, cf. Fig. \ref{FIG2}.

\begin{center}
\begin{figure}
\includegraphics[width=6cm]{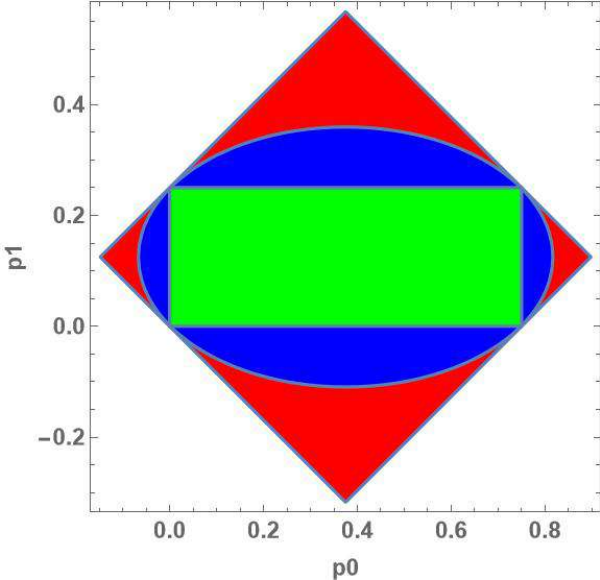}
\hspace*{.4cm} \includegraphics[width=6cm]{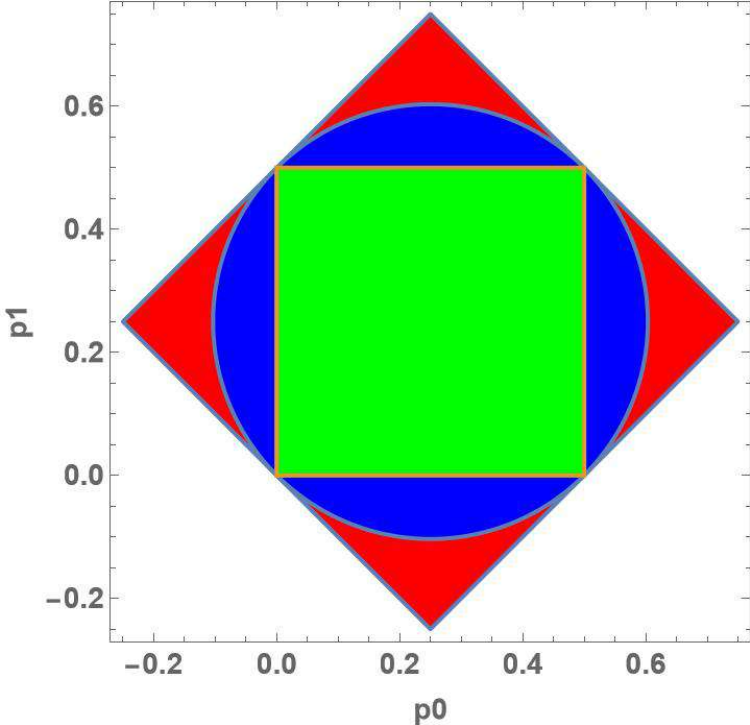} 
\caption{(Color online) Pauli maps: green rectangle represent completely positive maps; blue regions represent Schwarz maps which are not completely positive; red regions represent positive maps which are not Schwarz. {Left plot: $a=0.75 $. }Right plot: $a=0.5$ -- ellipse becomes a circle center at $(1/4,1/4)$ with radius $1/\sqrt{8}$,  and a green rectangle becomes square.    \label{FIG2} }
\end{figure}
\end{center}

Let us recall that positivity and complete positivity of a Pauli map can be fully characterized in terms of its spectrum: one finds

\begin{equation}\label{}
  \Phi(\sigma_\alpha) = \lambda_\alpha \sigma_\alpha , \ \ \ \ \alpha = 0,1,2,3 ,
\end{equation}
where the corresponding eigenvalues read

\begin{equation}\label{}
  \lambda_\alpha = \sum_{\beta=0}^3 H_{\alpha\beta} p_\beta ,
\end{equation}
and $H_{\alpha\beta}$ is a Hadamard matrix

\begin{equation}\label{H}
  H = \left( \begin{array}{cccc} 1 & 1 & 1 & 1 \\  1 & 1 & -1 & -1  \\    1 & -1 & 1& -1 \\  1 & -1 & -1 & 1 \end{array} \right) .
\end{equation}
Note that $\lambda_0=1$ corresponds to the fact that $\Phi$ is trace-preserving. One has

\begin{itemize}
  \item $\Phi$ is positive iff

\begin{equation}\label{Pos}
  |\lambda_k|\leq 1 \ , \ \ (k=1,2,3) ,
\end{equation}

  \item $\Phi$ is completely positive iff

  \begin{equation}\label{FA}
    |\lambda_1 \pm \lambda_2| \leq |1\pm \lambda_3| .
  \end{equation}
\end{itemize}
Condition (\ref{FA}) was derived by  Algoet and Fujiwara \cite{FA}. Again, both conditions for positivity (\ref{Pos}) and Fujiwara-Algoet conditions for complete positivity have simple geometric interpretation \cite{KAROL,Mario,Karol}. They define a tetrahedron in the cube with vertices $\lambda_k=\pm 1$. The four vertices of the tetrahedron: $(1,-,1-1), (-1,1,-1), (1,-1,-1)$ and $(1,1,1)$, cf. Fig. \ref{FIG3}.

\begin{center}
\begin{figure}
 \centering
\includegraphics[width=6cm]{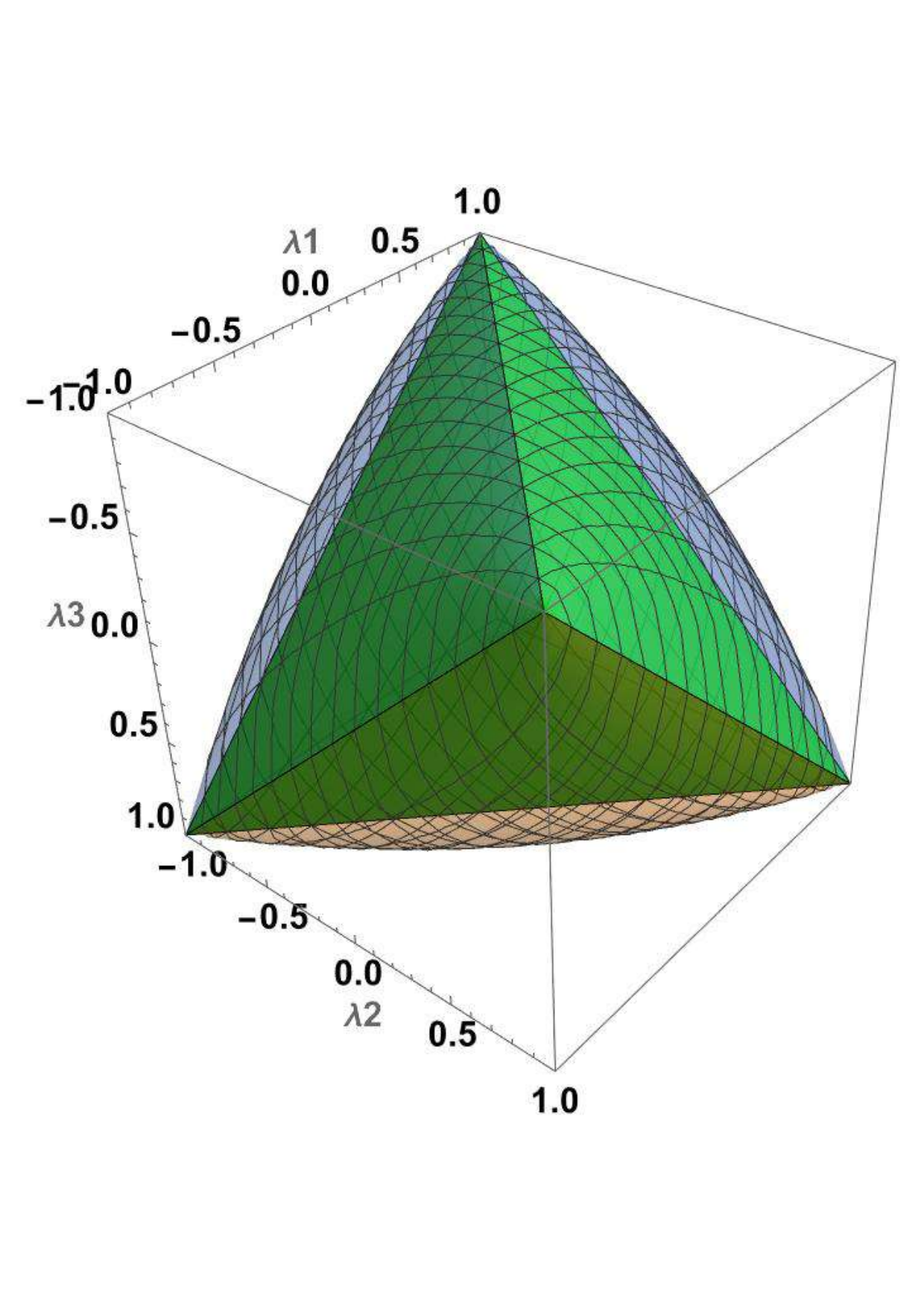} 
\caption{(Color online)
Fujiwara-Algoet tetrahedron (\ref{FA}) within a Fujiwara-Algoet-Schwarz surface (\ref{lll}).
  \label{FIG3} }
\end{figure}
\end{center}

Now, condition (\ref{p2p2}) can be rewritten in terms of $\{\lambda_k\}_{k=1}^3$ as follows

\begin{equation}\label{FAS}
  (1-\lambda_3)(\lambda_1 + \lambda_2)^2 + (1+\lambda_3)(\lambda_1 - \lambda_2)^2 \leq 2 .
\end{equation}
Note, that (\ref{FA}) immediately implies (\ref{FAS}). The boundary of a convex set of Schwarz Pauli maps is represented by

\begin{equation}\label{lll}
  \lambda_3 = \lambda_1 \lambda_2 \pm \sqrt{(1-\lambda_1^2)(1-\lambda_2^2)} ,
\end{equation}
and it is displayed on Fig. \ref{FIG3}. Note, that four vertices of the tetrahedron satisfy (\ref{lll}) and hence they belong to the boundary of the set of Schwarz maps.
The maximal violation of Fijiwara-Algoet conditions (\ref{FA}) corresponds to $\lambda_1=\lambda_2=\lambda_3 = - \frac 12$, or, equivalently, $p_0=-\frac 18$ and $p_1=p_2=p_3 = \frac 38$. For positive Pauli maps the maximal departure for complete positivity corresponds to vertex $(-1,-1,-1)$, or, equivalently, $p_0=-\frac 12$ and $p_1=p_2=p_3 = \frac 12$, that is, the Pauli map

\begin{equation}\label{}
  \Phi(X) = \frac 12 (\sigma_1 X \sigma_1 + \sigma_2 X \sigma_2 + \sigma_3 X \sigma_3 - X) ,
\end{equation}
is positive but not Schwarz, and

\begin{equation}\label{}
  \Phi(X) = \frac 38 \left(\sigma_1 X \sigma_1 + \sigma_2 X \sigma_2 + \sigma_3 X \sigma_3 - \frac 13 X\right) ,
\end{equation}
is Schwarz but evidently not completely positive.

\begin{corollary} A Pauli map

\begin{equation}\label{}
  \Phi_\alpha(X) := \frac{1}{3-\alpha} (\sigma_1 X \sigma_1 + \sigma_2 X \sigma_2 + \sigma_3 X \sigma_3 - \alpha X) ,
\end{equation}
is positive iff $\alpha \leq 1$, Schwarz iff $\alpha \leq \frac 13$, and completely positive iff $\alpha \leq 0$.
\end{corollary}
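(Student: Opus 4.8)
The plan is to recognize $\Phi_\alpha$ as a special member of the Pauli family (\ref{Pauli}) and then push it through the three characterizations already established in this section, each of which should collapse to a one-line inequality in $\alpha$. Writing the identity term as $-\alpha X = -\alpha\,\sigma_0 X \sigma_0$, I would first read off the Pauli weights
\[
  p_0 = \frac{-\alpha}{3-\alpha}, \qquad p_1 = p_2 = p_3 = \frac{1}{3-\alpha},
\]
which sum to $1$, confirming that $\Phi_\alpha$ is trace-preserving (and unital). I work in the regime $\alpha < 3$ where the map is well defined, so that $3-\alpha > 0$; note that all three thresholds $0,\tfrac13,1$ lie safely below $3$, so this costs nothing. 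Translating into the parameters of the previous section gives $a=b=p_0+p_3=\frac{1-\alpha}{3-\alpha}$, $\lambda=p_0-p_3=\frac{-(1+\alpha)}{3-\alpha}$, and $\mu=p_1-p_2=0$; equivalently, since $p_1=p_2=p_3$, the Hadamard formula $\lambda_k=\sum_\beta H_{k\beta}p_\beta$ yields a single common eigenvalue $\lambda_1=\lambda_2=\lambda_3=\frac{-(1+\alpha)}{3-\alpha}$ with $\lambda_0=1$.

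Next I would apply each criterion. For positivity, condition (\ref{Pos}) demands $|\lambda_k|\le 1$, i.e. $|1+\alpha|\le 3-\alpha$; the binding branch $\alpha\ge -1$ gives $1+\alpha\le 3-\alpha$, hence $\alpha\le 1$ (the branch $\alpha<-1$ is automatically satisfied and does not enlarge the threshold). For complete positivity the cleanest route is $p_\alpha\ge 0$: since $p_1=p_2=p_3>0$ automatically, CP holds iff $p_0\ge 0$, i.e. iff $-\alpha\ge 0$, giving $\alpha\le 0$. One may instead verify the Fujiwara--Algoet condition (\ref{FA}), which for three equal eigenvalues $\lambda$ reduces to $2|\lambda|\le|1+\lambda|$ and again yields $\alpha\le 0$.

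Finally, the Schwarz property follows from the ellipse condition (\ref{pppp}). Since $\mu=0$, it degenerates to $\lambda^2\le a$, i.e. $(1+\alpha)^2\le (1-\alpha)(3-\alpha)$ after multiplying through by $3-\alpha>0$; expanding, the quadratic terms cancel and the inequality becomes $1+2\alpha\le 3-4\alpha$, so $\alpha\le\tfrac13$. I do not anticipate any genuine obstacle here: the entire argument is substitution into pre-proved characterizations. The only points requiring a little care are keeping track of the sign of $3-\alpha$ so that clearing denominators in the inequalities is legitimate, and splitting the absolute value in the positivity step at $\alpha=-1$. Everything else is linear bookkeeping, and the three thresholds $\alpha\le 1$, $\alpha\le\tfrac13$, $\alpha\le 0$ emerge at once, consistent with the expected nesting of positive, Schwarz, and completely positive maps.
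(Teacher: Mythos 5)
Your proposal is correct and follows essentially the same route as the paper: the corollary is a direct substitution of $p_0=\tfrac{-\alpha}{3-\alpha}$, $p_1=p_2=p_3=\tfrac{1}{3-\alpha}$ (equivalently $\lambda_1=\lambda_2=\lambda_3=\tfrac{-(1+\alpha)}{3-\alpha}$) into the already-established criteria (\ref{Pos}), (\ref{pppp}) and $p_\alpha\geq 0$, and the paper itself exhibits the two boundary cases $\alpha=1$ and $\alpha=\tfrac13$ immediately before stating the corollary. Your arithmetic at each threshold checks out.
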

One may call (\ref{FAS}) the Fujiwara-Algoet-Schwarz condition. Note that (\ref{FAS}) defines a 3-dimensional region containing the tetrahedron of Pauli channels, cf. Figure \ref{FIG3}. One easily computes the corresponding volumes

\begin{equation}\label{}
V_{\rm Positive} = 8 \ , \ \ \   V_{\rm Schwarz} =  \frac{\pi^2}{2} \approx 4.93  \ , \ \ \  V_{\rm CP} = 4 \ ,
\end{equation}
which shows that the volume of a set of Pauli Schwarz maps is $23\%$ larger than  of Pauli channels, and a set of positive Pauli maps is $62\%$ larger than a set of Schwarz maps.

\section{Generalized Schwarz covariant maps}   \label{S-IVa}

In a recent paper \cite{Alex} authors proposed the following

\begin{definition} A linear map $\Phi : \mathcal{M}_n \to \mathcal{M}_m$ is called a generalized Schwarz map if

\begin{equation}\label{}
  \left( \begin{array}{cc} \Phi(\oper_n) & \Phi(K) \\ \Phi(K)^\dagger & \Phi(K^\dagger K) \end{array} \right) \geq 0 ,
\end{equation}
for all $K \in \mathcal{M}_n$ .
\end{definition}
One immediately shows that $\Phi$ is generalized Schwarz if and only if

\begin{equation}\label{GS}
  \Phi(X^\dagger X) \geq \Phi(X)^\dagger \Phi(\oper_n)^+ \Phi(X) ,
\end{equation}
where $A^+$ denotes a generalized Moore-Penrose inverse of $A$. Actually, such class of maps was already analyzed in \cite{KS} (in \cite{KS} it was assumed that $\Phi(\oper_n)$ is strictly positive). Note, that if the map is unital (\ref{GS}) reduces to the original Schwarz inequality. Interestingly $\Phi$ is generalized Schwarz if and only if it satisfies the following tracial inequality \cite{Alex}

\begin{equation}\label{}
{\rm Tr} [ \Phi^\ddag(K^\dagger X^+ K)] \geq {\rm Tr}[\Phi^\ddag(K^\dagger) \Phi^\ddag(X)^+ \Phi^\ddag(K) ] ,
\end{equation}
for all $K \in \mathcal{M}_n$, $X \geq 0$ such that ${\rm Ker}\, X \subseteq {\rm Ker}\, K^\dagger$.

Consider now a map $\Phi$ defined by (\ref{!}) which is in general not unital. Let us use the following notation

\begin{equation}\label{A'}
   \left( \begin{array}{cc} a_{11} & a_{12} \\ a_{21} & a_{22}  \end{array} \right) =
    \left( \begin{array}{cc} a & a' \\ b' & b  \end{array} \right) \ , \ \ \ \ a,b,a',b'\geq 0 .
\end{equation}

\begin{proposition} $\Phi$ is generalized Schwarz if and only if

\begin{equation}\label{elipse-G}
  \frac{|\lambda|^2}{a} +  \frac{|\mu|^2}{a'} \leq B , \ \ \  \frac{|\lambda|^2}{b} +  \frac{|\mu|^2}{b'} \leq A .
\end{equation}
where $A:= a+a'$ and $B:=b+b'$.
\end{proposition}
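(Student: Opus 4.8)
The plan is to reduce the generalized Schwarz property of the (in general non-unital) map $\Phi$ to the ordinary Schwarz property of an associated \emph{unital} map, so that Theorem~\ref{TH} can be applied directly. First I would evaluate $\Phi$ on the identity: reading off the matrix form of (\ref{!}) with the parametrization (\ref{A'}), the off-diagonal terms drop out and one gets $\Phi(\oper_2) = \mathrm{diag}(A,B)$ with $A=a+a'$ and $B=b+b'$. Assuming for the moment that $A,B>0$, so that $\Phi(\oper_2)$ is invertible and $\Phi(\oper_2)^{+}=\Phi(\oper_2)^{-1}=\mathrm{diag}(1/A,1/B)$, I would introduce the congruence-normalized map
\[
  \Psi(X) := \Phi(\oper_2)^{-1/2}\,\Phi(X)\,\Phi(\oper_2)^{-1/2}.
\]

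Second, I would verify that $\Psi$ again belongs to the family (\ref{!}) and is unital. A direct computation from the matrix form shows that $\Psi$ has diagonal coefficients $\tilde a=a/A$, $\tilde a_{12}=a'/A$, $\tilde a_{21}=b'/B$, $\tilde b=b/B$ and off-diagonal coefficients $\tilde\lambda=\lambda/\sqrt{AB}$, $\tilde\mu=\mu/\sqrt{AB}$. The row sums $\tilde a_{i1}+\tilde a_{i2}=1$ confirm that $\Psi$ is unital, so it is precisely of the type governed by Theorem~\ref{TH}, with unital parameters $(\tilde a,\tilde b,\tilde\lambda,\tilde\mu)$.

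Third, the decisive step is an intertwining identity for the Schwarz defects. Using the equivalent form (\ref{GS}) of the generalized Schwarz condition, set $M:=\Phi(X^\dagger X)-\Phi(X)^\dagger\Phi(\oper_2)^{-1}\Phi(X)$. Since $\Phi(\oper_2)^{-1/2}$ is Hermitian and invertible, a short manipulation gives
\[
  \Phi(\oper_2)^{-1/2}\,M\,\Phi(\oper_2)^{-1/2}=\Psi(X^\dagger X)-\Psi(X)^\dagger\Psi(X),
\]
so that $M\geq 0$ for all $X$ if and only if $\Psi$ is Schwarz. Applying Theorem~\ref{TH} to $\Psi$ yields $|\tilde\lambda|^2/\tilde a+|\tilde\mu|^2/(1-\tilde a)\leq 1$ and $|\tilde\lambda|^2/\tilde b+|\tilde\mu|^2/(1-\tilde b)\leq 1$; substituting the rescaled parameters and clearing denominators by $A$ and $B$ turns these into exactly the two inequalities (\ref{elipse-G}). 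A pleasant feature of this route is that sufficiency for \emph{arbitrary} (not merely off-diagonal) $X$ is inherited for free from the unital result, so the heavy determinant computation used in the proof of Theorem~\ref{TH} is entirely bypassed.

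The main obstacle I anticipate is the degenerate case in which $\Phi(\oper_2)$ fails to be invertible, i.e. $A=0$ or $B=0$; then the congruence argument does not apply verbatim and the genuine Moore--Penrose inverse of (\ref{GS}) must be handled. I expect this to be dispatched either by a direct inspection of the block-positivity condition — when, say, $A=0$ forces $a=a'=0$, the vanishing $(1,1)$ entry of the top-left block forces the associated off-diagonal data to vanish, matching the degenerate reading of (\ref{elipse-G}) — or by a limiting argument as $A,B\to 0^{+}$. It is worth checking separately that the necessity direction can also be obtained transparently by testing $M$ on the purely off-diagonal $X$, which makes $M$ diagonal and reduces each condition to the nonnegativity of a single binary quadratic form whose discriminant inequality is precisely (\ref{elipse-G}); this provides a useful cross-check independent of the congruence reduction.
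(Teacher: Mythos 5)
Your proposal follows essentially the same route as the paper: the paper also reduces to the unital case by conjugating with $\sqrt{\Phi(\oper)^{+}}$, observes that the rescaled map stays in the family (\ref{!}) with parameters $a/A$, $a'/A$, $b/B$, $b'/B$, $\lambda/\sqrt{AB}$, $\mu/\sqrt{AB}$, and then invokes the ellipse conditions of Theorem \ref{TH}. Your intertwining identity and parameter bookkeeping are correct; the only cosmetic difference is that you treat the degenerate case $A=0$ or $B=0$ separately, whereas the paper absorbs it into the Moore--Penrose notation.
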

Proof: note that $\Phi$ is generalized Schwarz if and only if the following map

\begin{equation}\label{GenS}
  \widetilde{\Phi}(X) := \sqrt{\Phi(\oper)^+}\, \Phi(X)\, \sqrt{\Phi(\oper)^+} ,
\end{equation}
is Schwarz. One finds

\begin{equation}\label{}
  \Phi(\oper) =  \left( \begin{array}{cc} A& 0 \\ 0 & B  \end{array} \right) ,
\end{equation}
and hence $ \widetilde{\Phi}$ belongs to the class (\ref{!}) with the corresponding parameters:

\begin{equation}\label{}
  \widetilde{a} := \frac{a}{A} \ , \ \ \widetilde{a'} := \frac{a'}{A} \ , \ \  \widetilde{b} := \frac{b}{B} \ , \  \ \widetilde{b'} := \frac{b'}{B} ,
\end{equation}
and

\begin{equation}\label{}
  \widetilde{\lambda} := \frac{\lambda}{\sqrt{AB}} \ , \  \ \ \ \widetilde{\mu} := \frac{\mu}{\sqrt{AB}} .
\end{equation}
Since $\widetilde{\Phi}$ is unital it is Schwarz if and only if the corresponding parameters satisfy (\ref{elipse}). The ellipse conditions (\ref{elipse}) are equivalent to (\ref{elipse-G}) in terms of the original parameters of $\Phi$. \hfill $\Box$

Concerning Schwarz inequality there is an interesting relation between $\Phi$ and its dual $\Phi^\ddag$ defined via

\begin{equation}\label{}
  {\rm Tr} (\Phi^\ddag(X^\dagger) Y) := {\rm Tr}(X^\dagger \Phi(Y)) .
\end{equation}
If $\Phi$ belongs to the family (\ref{!}) its dual also belongs to (\ref{!}), namely

\begin{equation}\label{!!}
  \Phi^\ddag(X) = \sum_{i,j=1}^2 a^{\rm D}_{ij} |i\>\<j| X |j\>\<i| + \Big( \overline{\lambda} P_1 X P_2 + {\lambda} P_2 X P_1 \Big) + \Big( \mu P_1 X^{\rm T}  P_2 + \overline{\mu} P_2 X^{\rm T}  P_1 \Big) ,
\end{equation}
with $a^{\rm D}_{ij} := a_{ji}$.

\begin{corollary} $\Phi^\ddag$ is generalized Schwarz if and only if

\begin{equation}\label{elipse-GD}
  \frac{|\lambda|^2}{a} +  \frac{|\mu|^2}{b'} \leq b+a' , \ \ \  \frac{|\lambda|^2}{b} +  \frac{|\mu|^2}{a'} \leq a+b' .
\end{equation}
\end{corollary}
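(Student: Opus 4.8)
The plan is to deduce this corollary directly from the Proposition characterizing generalized Schwarz maps (condition (\ref{elipse-G})), applied to the dual $\Phi^\ddag$ rather than re-running the entire argument. The decisive observation is that $\Phi^\ddag$ again belongs to the family (\ref{!}), as recorded in (\ref{!!}), so the same ellipse-type conditions apply verbatim—one only has to read off the correct parameters of $\Phi^\ddag$ and substitute.

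First I would record the parameters of $\Phi^\ddag$. From (\ref{!!}) the coefficient matrix of the diagonal part is the transpose $a^{\rm D}_{ij}=a_{ji}$, so in the notation (\ref{A'}) the dual is parameterized by
\begin{equation}
  \left( \begin{array}{cc} a^{\rm D}_{11} & a^{\rm D}_{12} \\ a^{\rm D}_{21} & a^{\rm D}_{22} \end{array} \right) = \left( \begin{array}{cc} a & b' \\ a' & b \end{array} \right) .
\end{equation}
Moreover, passing to the dual merely conjugates the coefficient of $P_1 X P_2$ (replacing $\lambda$ by $\overline{\lambda}$) and leaves the $\mu$-terms untouched, so the moduli $|\lambda|$ and $|\mu|$ are identical for $\Phi$ and $\Phi^\ddag$. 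Consequently $\Phi^\ddag(\oper)$ is diagonal with entries $A^{\rm D}=a+b'$ (first row sum) and $B^{\rm D}=b+a'$ (second row sum).

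Next I would apply the Proposition directly to $\Phi^\ddag$. Substituting the dual parameters—with $a$ unchanged, the off-diagonal slot $a'$ now occupied by $b'$, $b$ unchanged, the slot $b'$ now occupied by $a'$, and the row sums $A\mapsto a+b'$, $B\mapsto b+a'$—into (\ref{elipse-G}) gives
\begin{equation}
  \frac{|\lambda|^2}{a} + \frac{|\mu|^2}{b'} \leq b+a' , \qquad \frac{|\lambda|^2}{b} + \frac{|\mu|^2}{a'} \leq a+b' ,
\end{equation}
which is exactly (\ref{elipse-GD}). There is essentially no analytic obstacle: the only thing requiring care is the bookkeeping of the transposition $a_{ij}\to a_{ji}$, which interchanges the two off-diagonal entries $a'\leftrightarrow b'$ while simultaneously interchanging the two row sums $A\leftrightarrow B$, so that the two inequalities get reshuffled precisely into the asymmetric form displayed. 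Since the Proposition is already established for every member of the family (\ref{!}), and $\Phi^\ddag$ lies in that family, the corollary follows immediately.
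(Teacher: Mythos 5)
Your proposal is correct and follows exactly the route the paper intends: the corollary is an immediate consequence of the preceding Proposition applied to $\Phi^\ddag$, using (\ref{!!}) to read off the dual's parameters $a^{\rm D}_{ij}=a_{ji}$ (so $a'\leftrightarrow b'$, hence $A\mapsto a+b'$, $B\mapsto b+a'$) while $|\lambda|$ and $|\mu|$ are unchanged. The bookkeeping of the two swapped off-diagonal entries and the two swapped row sums is carried out correctly and reproduces (\ref{elipse-GD}).
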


\begin{proposition} \label{Pro-7} 
The following relations hold:

\begin{enumerate}
  \item for $a'=b'$ the map  $\Phi$ is generalized Schwarz if and only if $\Phi^{\ddag}$ is generalized Schwarz,

  \item for $a'>b'$ if $\Phi$ is generalized Schwarz then $\Phi^{\ddag}$ is also generalized Schwarz,

  \item for $a'<b'$ if $\Phi^\ddag$ is generalized Schwarz then $\Phi$ is also generalized Schwarz.

\end{enumerate}

\end{proposition}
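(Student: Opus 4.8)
The plan is to reduce the whole proposition to an elementary comparison of two planar regions. By the Proposition just proved, $\Phi$ is generalized Schwarz iff the two inequalities (\ref{elipse-G}) hold, and by the Corollary, $\Phi^\ddag$ is generalized Schwarz iff the two inequalities (\ref{elipse-GD}) hold. Setting $x:=|\lambda|^2\ge 0$ and $y:=|\mu|^2\ge 0$, each of these four conditions is affine in $(x,y)$, so ``$\Phi$ generalized Schwarz'' and ``$\Phi^\ddag$ generalized Schwarz'' each cut out the intersection of two half-planes with the closed first quadrant. All three assertions are then instances of one planar region being contained in (or equal to) the other, and I would prove them by comparing the two regions head-on rather than by pushing inequalities around blindly.

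The structural fact I would establish first is that the four boundary lines of (\ref{elipse-G}) and (\ref{elipse-GD}) all pass through the single point $(x_0,y_0)=(ab,\,a'b')$: substituting $x=ab,\ y=a'b'$ turns each of the four inequalities into an equality (e.g. $\tfrac{ab}{a}+\tfrac{a'b'}{a'}=b+b'=B$, and likewise for the other three). Translating the origin to this point via $\xi:=x-ab,\ \eta:=y-a'b'$ makes all four inequalities homogeneous, namely $\tfrac\xi a+\tfrac\eta{a'}\le 0$, $\tfrac\xi b+\tfrac\eta{b'}\le 0$ for $\Phi$ against $\tfrac\xi a+\tfrac\eta{b'}\le 0$, $\tfrac\xi b+\tfrac\eta{a'}\le 0$ for $\Phi^\ddag$. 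Each region is thus the first-quadrant slice of a convex cone with apex at $(ab,a'b')$, and since the first-quadrant constraint is identical for the two maps, the desired inclusions reduce to inclusions of the two cones. These are decided purely by the ordering of the slopes of the bounding rays, $-a'/a$ and $-b'/b$ for $\Phi$ versus $-b'/a$ and $-a'/b$ for $\Phi^\ddag$.

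From this picture the three cases separate cleanly. Part 1 is immediate and needs no geometry: if $a'=b'$ the two pairs of inequalities in (\ref{elipse-G}) and (\ref{elipse-GD}) literally coincide, so the regions are equal. For Parts 2 and 3 I would first note that they are dual to one another: by (\ref{!!}) passing from $\Phi$ to $\Phi^\ddag$ interchanges $a'\leftrightarrow b'$ while fixing $a,b$, so applying Part 2 to $\Phi^\ddag$ (whose off-diagonal data satisfies $b'>a'$ precisely when $a'<b'$) yields Part 3 with $(\Phi^\ddag)^\ddag=\Phi$. Hence it suffices to prove one one-sided inclusion, which I would verify by checking that each bounding ray of the smaller cone lies in the angular wedge spanned by the bounding rays of the larger one, i.e. that the ratios $b'/a,\ a'/b$ interlace $a'/a,\ b'/b$ in the right order.

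The step that carries the real content is exactly this slope comparison, and it is also where I expect the main obstacle to lie: the quantity that actually governs which cone sits inside which is the sign of $(a-b)(a'-b')$, so one must verify that under the case hypothesis this sign forces the claimed direction of inclusion. A useful consistency check is that the two ``sum'' functionals $\tfrac\xi a+\tfrac\eta{a'}+\tfrac\xi b+\tfrac\eta{b'}$ and $\tfrac\xi a+\tfrac\eta{b'}+\tfrac\xi b+\tfrac\eta{a'}$ coincide, confirming that both cones share the same axis and differ only in angular opening. Finally I would record that intersecting each cone with the common first-quadrant constraint $\{\xi\ge -ab,\ \eta\ge -a'b'\}$ cannot turn a non-inclusion into an inclusion, since that constraint is shared; in particular, once the problem is linearized in $(|\lambda|^2,|\mu|^2)$ no boundary determinant computation of the kind used in Theorem \ref{TH} is required.
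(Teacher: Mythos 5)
Your geometric setup is sound and genuinely different from the paper's computation: all four boundary lines of (\ref{elipse-G}) and (\ref{elipse-GD}) do pass through $(ab,a'b')$ in the $(x,y)=(|\lambda|^2,|\mu|^2)$ plane, each region is a wedge with that apex intersected with the common quadrant $x,y\geq 0$, and the duality reduction of Part 3 to Part 2 is fine. The fatal problem is the step you defer. Writing the slopes for $\Phi$ as $a'/a,\ b'/b$ and for $\Phi^\ddag$ as $b'/a,\ a'/b$ (two pairs with equal product $a'b'/ab$), the $\Phi$-wedge is contained in the $\Phi^\ddag$-wedge iff $\min\{a'/a,\,b'/b\}\leq\min\{b'/a,\,a'/b\}$, which holds iff $(a-b)(a'-b')\leq 0$ --- exactly the quantity you flag. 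But the hypothesis of Part 2 fixes only the sign of $a'-b'$; when $a'>b'$ \emph{and} $a>b$ the inclusion reverses, and the failure is witnessed by genuine first-quadrant points, so the truncation does not rescue it. Concretely, take $a=a'=0.9$, $b=b'=0.1$, $\lambda=0$, $|\mu|^2=0.14$: then $|\mu|^2/a'=0.15{\ldots}\leq b+b'=0.2$ and $|\mu|^2/b'=1.4\leq a+a'=1.8$, so $\Phi$ is generalized Schwarz (equivalently, $\widetilde{\Phi}$ of (\ref{GenS}) is bistochastic with $\tilde a=\tilde b=1/2$ and $|\tilde\mu|^2=7/18<1/2$); yet $\Phi^\ddag$ is unital with $a=0.9$ and $|\mu|^2=0.14>1-a$, which already violates the necessary condition of Proposition \ref{Sch} (the matrix (\ref{SN1}) acquires the negative entry $-0.04$). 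So the assertion that ``the case hypothesis forces the claimed direction of inclusion'' cannot be verified, and your auxiliary claim that intersecting with a shared constraint ``cannot turn a non-inclusion into an inclusion'' is also false as a general principle (here it merely happens not to help).

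What your cone picture actually exposes is that the trichotomy is governed by $\mathrm{sign}\big((a-b)(a'-b')\big)$ rather than by $\mathrm{sign}(a'-b')$. The paper's own appendix argument avoids noticing this only because it silently restricts to the completely positive rectangle: the decisive chain $\frac{|\lambda|^2}{a}+\frac{|\mu|^2}{b'}\leq b+b'+|\mu|^2\frac{a'-b'}{a'b'}\leq b+a'$ invokes $|\mu|^2\leq a'b'$ (and its companion $|\lambda|^2\leq ab$), whereas the whole point of Theorem \ref{TH} is that the (generalized) Schwarz region protrudes beyond that rectangle. I would therefore not try to patch your deferred slope comparison to match the statement as written; instead, carry the wedge analysis to completion and state the conclusion in terms of $(a-b)(a'-b')$, which your method proves cleanly and which reduces correctly to Part 1 when $a'=b'$.
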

For the proof cf. Appendix.

\begin{corollary} In the unital case, i.e. $a'=1-a$ and $b'=1-b$,   the following relations hold:

\begin{enumerate}
  \item for $a=b$ the map  $\Phi$ is Schwarz if and only if $\Phi^{\ddag}$ is  Schwarz,

  \item for $a<b$ if $\Phi$ is Schwarz then $\Phi^{\ddag}$ is  generalized Schwarz,

  \item for $a>b$ if $\Phi^\ddag$ is generalized Schwarz then $\Phi$ is Schwarz.

\end{enumerate}

\end{corollary}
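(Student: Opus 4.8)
The plan is to obtain the Corollary as the unital specialization of Proposition \ref{Pro-7}, so that the only genuine work is translating the parameters and checking the wording of case~1. First I would set $a' = 1-a$ and $b' = 1-b$, so that $A = a+a' = 1$ and $B = b+b' = 1$. With these values the generalized-Schwarz conditions (\ref{elipse-G}) for $\Phi$ collapse exactly to the ellipse conditions (\ref{elipse}); since for a unital map generalized Schwarz coincides with ordinary Schwarz (as noted after (\ref{GS})), the hypothesis ``$\Phi$ is generalized Schwarz'' becomes ``$\Phi$ is Schwarz.'' The three trichotomy cases of Proposition \ref{Pro-7} are $a'=b'$, $a'>b'$, $a'<b'$, which under $a'=1-a$, $b'=1-b$ read $a=b$, $a<b$, $a>b$ respectively. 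Substituting into the three conclusions of Proposition \ref{Pro-7} produces verbatim the three cases of the Corollary, the only point needing attention being that in case~1 the word ``Schwarz'' (rather than ``generalized Schwarz'') is legitimate for $\Phi^\ddag$.

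For that point I would verify that $\Phi^\ddag$ is itself unital precisely when $a=b$. From (\ref{!!}) the diagonal data of $\Phi^\ddag$ is $a^{\rm D}_{ij} = a_{ji}$, so its row sums are $a^{\rm D}_{11}+a^{\rm D}_{12} = a+(1-b)$ and $a^{\rm D}_{21}+a^{\rm D}_{22} = (1-a)+b$; both equal $1$ iff $a=b$. Hence when $a=b$ the map $\Phi^\ddag$ is unital and its generalized-Schwarz property reduces to the ordinary Schwarz property, which is why case~1 can be phrased symmetrically for both maps. In cases~2 and~3 the map $\Phi^\ddag$ is genuinely non-unital, so ``generalized Schwarz'' is the correct notion there, exactly as stated.

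If instead a self-contained argument is wanted, I would compare the explicit inequalities. In the unital case the Schwarz region of $\Phi$ is cut out of the first quadrant of the $(|\lambda|^2,|\mu|^2)$-plane by the two \emph{linear} inequalities (\ref{elipse}), and the generalized-Schwarz region of $\Phi^\ddag$ by the two linear inequalities obtained from (\ref{elipse-GD}), namely $\frac{|\lambda|^2}{a}+\frac{|\mu|^2}{1-b}\leq 1+(b-a)$ and $\frac{|\lambda|^2}{b}+\frac{|\mu|^2}{1-a}\leq 1+(a-b)$. For $a<b$ the Schwarz polygon of $\Phi$ has vertices $(0,0)$, $(a,0)$, $(ab,(1-a)(1-b))$ and $(0,1-b)$; I would check that both inequalities for $\Phi^\ddag$ hold at each of these four vertices, and since the target constraints are linear, convexity then delivers the inclusion on the whole polygon. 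The case $a>b$ is cleanest via duality (apply the $a<b$ result to the dual pair) or directly via Proposition \ref{Pro-7}, since there the combinatorial type of the region changes with the sign of $1-a-b$; the case $a=b$ is degenerate, both polygons coinciding.

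The main obstacle is that the inclusion cannot be obtained by a crude estimate that discards one of the two ellipse constraints: bounding $\frac{|\mu|^2}{1-b}$ using only the first inequality of (\ref{elipse}) forces the spurious requirement $|\mu|^2\leq(1-a)(1-b)$, while bounding $\frac{|\lambda|^2}{a}$ using only the second forces $|\lambda|^2\leq ab$, and neither follows from Schwarz alone. One is therefore compelled to use both ellipse conditions simultaneously, and the decisive configuration is the intersection vertex $(|\lambda|^2,|\mu|^2)=(ab,(1-a)(1-b))$, where \emph{both} ellipses of $\Phi$ are tight and, correspondingly, \emph{both} conditions (\ref{elipse-GD}) for $\Phi^\ddag$ are saturated. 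Confirming that this corner --- the point where the containment is exactly tight --- is respected is the crux; everything else is linear interpolation.
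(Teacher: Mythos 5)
Your primary route --- specializing Proposition \ref{Pro-7} via $a'=1-a$, $b'=1-b$, so that $A=B=1$, condition (\ref{elipse-G}) collapses to (\ref{elipse}), generalized Schwarz reduces to ordinary Schwarz for the unital $\Phi$, and $\Phi^\ddag$ is unital precisely when $a=b$ --- is exactly how the paper intends the Corollary to be obtained (it is stated without separate proof as an immediate consequence of Proposition \ref{Pro-7}), and your translation of the three cases $a'=b'$, $a'>b'$, $a'<b'$ into $a=b$, $a<b$, $a>b$ is correct. The supplementary self-contained check at the polygon vertices in the $(|\lambda|^2,|\mu|^2)$-plane is also sound, and correctly identifies $(ab,(1-a)(1-b))$ as the configuration where the inclusion is tight.
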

This clearly shows that contrary to positivity and complete positivity the generalized Schwarz property is not preserved by passing to the dual map.


\section{Conclusions}  \label{S-C}

We provided a detailed analysis of positivity properties of a large class of qubit maps represented by (\ref{!}). This class of maps satisfies diagonal unitary or orthogonal symmetry and already found a lot of applications in the study of completely positive quantum dynamics and in detection of two-qubit quantum entanglement. Our analysis clearly shows that for a subclass of unital maps

$$   {\rm completely\ positive\ maps} \ \subsetneq\ {\rm Schwarz\ maps} \  \subsetneq \ {\rm positive\ maps} . $$
Our analysis can be  summarized as follows:

\begin{itemize}
  \item $\Phi$ is completely positive if and only if

  $$ |\lambda| \leq \sqrt{ab} \ ,\ \ {\rm and} \ \ |\mu|\leq  \sqrt{(1-a)(1-b)} . $$

  \item $\Phi$ is a Schwarz map   if and only if

  $$ \frac{|\lambda|^2}{a} +  \frac{|\mu|^2}{1- a} \leq 1 , \ \ {\rm and} \ \  \frac{|\lambda|^2}{b} +  \frac{|\mu|^2}{1- b} \leq 1 . $$

  \item $\Phi$ is  positive if and only if

  $$ |\lambda| + |\mu| \leq \sqrt{ab}+  \sqrt{(1-a)(1-b)} . $$

\end{itemize}
The corresponding regions were illustrated in Figure \ref{FIG}. The whole analysis is illustrated by a seminal class of qubit Pauli maps. In this case our conditions give rise to Fujiwara-Algoet-Schwarz condition (\ref{FAS}) which defines a class of Schwarz Pauli maps.

Interestingly, the map originally found by Choi \cite{Choi2} which is Schwarz but not completely positive corresponds to $a=b=3/4$, $\lambda=0$, and $\mu = 1/2$. We also generalize the analysis to include so called generalized Schwarz maps \cite{Alex} and find an interesting relations between $\Phi$ and its dual $\Phi^\ddag$ (Proposition \ref{Pro-7}).  Following \cite{Ion1,Ion2,Ion3} it would be interesting to extend the analysis of maps displaying unitary and orthogonal diagonal symmetry beyond qubit scenario.

\appendix

\section{Proof of Proposition \ref{Pro-P}}

Proof: $\Phi$ is positive if and only if the Choi matrix (\ref{CHOI}) is block positive. Consider a general case, i.e. not necessary unital. Following notation (\ref{A'}) one has diagonal elements $a,a',b,b'\geq 0$. Now, the map is completely positive iff

$$   |\lambda| \leq \sqrt{ab}\ ,\ \ \ |\mu|\leq \sqrt{a'b'} . $$
It is well known \cite{Topical,Guhne} that in the qubit case the Choi matrix of a positive map may have at most one negative eigenvalue. Let us assume that $ab < |\lambda|^2$ and $a'b' > |\mu|^2$.  Now, due the seminal Woronowicz result \cite{Wor} any block-positive matrix in $\mathcal{M}_2 \otimes \mathcal{M}_2$ is decomposable, that is,

\begin{equation}\label{CAB}
  C = A + B^\Gamma ,
\end{equation}
with $A,B \geq 0$, and $B^\Gamma$ denotes partial transposition. Again, it is sufficient to prove the result for $\lambda = |\lambda|$ and $\mu = |\mu|$. Let

$$ |\lambda| = \sqrt{ab} + \kappa \ , \ \ \ \kappa > 0 . $$
One finds that $C$ is block-positive if and only if there exists $x,y \geq 0$ such that

\begin{equation}\label{???}
 xy = |\mu|^2 \ , \ \ \  (a' -x)(b'-y) \geq \kappa^2 .
\end{equation}
Indeed, in this case one finds (\ref{CAB}) with

\begin{equation}\label{}
  A= \left( \begin{array}{cc|cc} a & . & . & \sqrt{ab} \\  . & y & |\mu| & .  \\ \hline   . & |\mu| & x & . \\  \sqrt{ab} & . & . & b \end{array} \right) ,
%
 B =  \left( \begin{array}{cc|cc} . & . & . & . \\  . & b'-y & \kappa & .  \\ \hline   . &  \kappa & a'-x & . \\  . & . & . & . \end{array} \right) .
\end{equation}
Now, since $y = |\mu|^2/x$, one has to check whether  there exist $x > 0$ such that $(a' -x)(b'-|\mu|^2/x) \geq \kappa^2$. Such $x$ exists if and only if the minimum $F_{\rm min}$ of $F(x) := (a' -x)(b'-|\mu|^2/x)$ satisfies

$$ F_{\rm min} \geq \kappa^2 = (|\lambda| - \sqrt{ab})^2 . $$
One finds $F'(x_0)=0$ for $x_0 = |\mu| \sqrt{a'/b'}$ and hence

$$ F_{\rm min} =   (a' -x_0)(b'-|\mu|^2/x_0) = (\sqrt{a'b'} - |\mu|)^2 . $$
Finally, the condition $F_{\rm min} \geq  (|\lambda| - \sqrt{ab})^2$ is equivalent to $|\lambda| + |\mu| \leq \sqrt{ab} + \sqrt{a'b'}$. \hfill $\Box$

\section{Analysis of Theorems \ref{TH-1} and \ref{TH-2}}

Theorems \ref{TH-1}: note, that for $A = \alpha_1 P_1 + \alpha_2 P_2$ the map $\Phi_-$ belongs to (\ref{!}) with

\begin{equation}\label{}
   a_{ij} = \frac{1}{\alpha_1+\alpha_2-1}  \left( \begin{array}{cc} \alpha_1-1 & \alpha_2 \\ \alpha_1 & \alpha_2-1  \end{array} \right) \ , \ \ \ \lambda= - \frac{1}{\alpha_1+\alpha_2-1} \ , \ \ \ \mu=0 .
\end{equation}
Hence, according to (\ref{S-I}) the map $\Phi_-$ is Schwarz if and only if
\begin{equation}\label{C1}
  |\lambda| =  \frac{1}{\alpha_1+\alpha_2 -1} \leq \min\left\{ \sqrt{ \frac{\alpha_1-1}{\alpha_1+\alpha_2 -1} } , \sqrt{ \frac{\alpha_2-1}{\alpha_1+\alpha_2 -1} } \right\} .
\end{equation}
This condition is  equivalent to $\| A^{-1} \|_\infty \leq \frac{{\rm Tr}A-1}{{\rm Tr}A}$, i.e.

\begin{equation}\label{C2}
 \max\left\{ 1/\alpha_1,1/\alpha_2\right\} \leq  \frac{\alpha_1+\alpha_2-1}{\alpha_1+\alpha_2} .
\end{equation}
Indeed, let $\alpha_1 \geq \alpha_2$. Then (\ref{C1}) gives

\begin{equation}\label{C1a}
   \frac{1}{\alpha_1+\alpha_2 -1} \leq  \sqrt{ \frac{\alpha_1-1}{\alpha_1+\alpha_2 -1} } ,
\end{equation}
and hence $\alpha_1\alpha_2 + \alpha_2^2 - 2 \alpha_2 - \alpha_1 \geq 0$. Similarly (\ref{C2}) gives

\begin{equation}\label{C2a}
 \frac{1}{\alpha_2} \leq  \frac{\alpha_1+\alpha_2-1}{\alpha_1+\alpha_2} ,
\end{equation}
and hence again $\alpha_1\alpha_2 + \alpha_2^2 - 2 \alpha_2 - \alpha_1 \geq 0$. Hence, Theorem \ref{TH-1} provides a special case of Proposition \ref{PRO-S-I}.

Theorems \ref{TH-2}: note, that for $A = \alpha_1 P_1 + \alpha_2 P_2$ the map $\Psi_+$ belongs to (\ref{!}) with

\begin{equation}\label{}
   a_{ij} = \frac{1}{\alpha_1+\alpha_2+1}  \left( \begin{array}{cc} \alpha_1+1 & \alpha_2 \\ \alpha_1 & \alpha_2+1  \end{array} \right) \ , \ \ \ \lambda= 0 \ , \ \ \ \mu= \frac{1}{\alpha_1+\alpha_2+1} \ .
\end{equation}
Hence, according to (\ref{S-II}) the map $\Psi_+$ is Schwarz if and only if
\begin{equation}\label{C1m}
  |\mu| =  \frac{1}{\alpha_1+\alpha_2 +1} \leq \min\left\{ \sqrt{ \frac{\alpha_1}{\alpha_1+\alpha_2 +1} } , \sqrt{ \frac{\alpha_2}{\alpha_1+\alpha_2 +1} } \right\} .
\end{equation}
This condition is  equivalent to $A \geq \frac{1}{{\rm Tr}A+1} \oper$, i.e.

\begin{equation}\label{C2m}
 \min\left\{ \alpha_1,\alpha_2\right\} \geq  \frac{1}{\alpha_1+\alpha_2+1} .
\end{equation}
Indeed, let $\alpha_1 \geq \alpha_2$. Then (\ref{C1m}) gives $   \alpha_2(\alpha_1+\alpha_2 +1) \geq 1$.  Similarly (\ref{C2m}) gives $   \alpha_2(\alpha_1+\alpha_2 +1) \geq 1$. Hence Theorem \ref{TH-2} is equivalent to (\ref{S-II}).

\section{Proof of Proposition \ref{Pro-7}}

If $a'=b'$, then conditions (\ref{elipse-G}) and (\ref{elipse-GD}) coincide and hence the result follows.  Now let us assume that  $a'>b'$ and $\Phi$ is generalized Schwarz. Using condition (\ref{elipse-G})

\begin{equation*}\label{}
\frac{\vert \lambda\vert^2}{a}+\frac{\vert \mu\vert^2}{a'} \leq b+b'\ ,
\end{equation*}
for $|\mu| \leq \sqrt{a' b'}$, and

\begin{equation*}\label{}
  \frac{\vert \lambda\vert^2}{b}+\frac{\vert \mu\vert^2}{b'} \leq a+a' ,
\end{equation*}
for $|\lambda| \leq \sqrt{a b} $, one finds

$$  \frac{\vert \lambda\vert^2}{a} \leq b+b' - \frac{\vert \mu\vert^2}{a'} , $$
and hence

$$  \frac{\vert \lambda\vert^2}{a}+\frac{\vert \mu\vert^2}{b'} \leq b+b' + {\vert \mu\vert^2}  \frac{a'-b'}{a'b'} \leq b+a' ,$$
due to $|\mu|^2 \leq a' b'$, and $a'-b'\geq 0$. Similar argument works for  $|\lambda| \leq \sqrt{a b}$.  If $b'>a'$ the proof is similar. \hfill $\Box$

\section*{Acknowledgments}
 DC was supported by the Polish National Science Center project No. 2018/30/A/ST2/00837. We thank Krzysztof Szczygielski for interesting discussions and his help in preparing plots.

\end{document}